\newtheorem{theo}{Theorem}
\newtheorem{prop}{Proposition}
\newtheorem{lemma}{Lemma}
\newtheorem{cor}{Corollary}
\theoremstyle{definition}
\newtheorem{defi}{Definition}
\theoremstyle{remark}
\newtheorem{rem}{Remark}
\newtheorem{exa}{Example}
\def\Rb{{\mathbf R}}
\def\Xb{{\mathbf X}}
\def\Yb{{\mathbf Y}}
\def\Ar{{\mathbb A}}
\def\Er{{\mathbb E}}
\def\Hr{{\mathbb H}}
\def\Lr{{\mathbb L}}
\def\Pr{{\mathbb P}}
\def\Qr{{\mathbb Q}}
\def\Rr{{\mathbb R}}
\def\Ac{{\mathcal{A}}}
\def\Fc{{\mathcal{F}}}
\def\Kc{{\mathcal{K}}}
\def\Nc{{\mathcal{N}}}
\def\Sc{{\mathcal{S}}}
\def\Xc{{\mathcal{X}}}
\def\one{{\rm \bf 1}}
\def\({\left(}     
\def\){\right)}    
\def\[{\left[}     
\def\]{\right]}
\def\as{{\frenchspacing a.s.}~}
\numberwithin{equation}{section}
\begin{document}
\title{Group cohesion under individual regulatory constraints} 
\author{Delia Coculescu and Freddy Delbaen}
\address{Institut f\"ur Banking und Finance, Universit\"at Z\"urich, Plattenstrasse
   14, 8032 Z\"{u}rich, Switzerland}
\address{Departement f\"ur Mathematik, ETH Z\"urich, R\"{a}mistrasse
   101, 8092 Z\"{u}rich, Switzerland}
   \address{ 
   Institut f\"ur Mathematik,
 Universit\"at Z\"urich, Winterthurerstrasse 190,
 8057 Z\"urich, Switzerland}
\date{\today}
\thanks{The first author thanks the participants at the seminar Finance and Insurance at the University of Zurich, in particular Pablo Koch Medina for helpful interactions. }
\maketitle

\begin{abstract}
We consider a group consisting of $N$ business units.  We suppose there are regulatory constraints for each unit, more precisely,  the net worth of each business unit is required to belong to a set of acceptable risks, assumed to be a convex cone. Because of these requirements, there are less incentives to operate under a group structure, as creating one single business unit, or altering the liability repartition among units, may allow to reduce the required capital.  We analyse the possibilities for the group to benefit from a diversification effect and economise on the cost of capital.  We define and study the risk measures that allow for any group to achieve the minimal capital, as if it were a single unit, without altering the liability of business units, and despite the individual admissibility constraints. We call these risk measures cohesive risk measures.
\end{abstract} 
\section{Introduction}
We consider an insurance group structured in $N\geq 2$  business units. Each unit $i$ has some exogenous liability, modelled as a random variable $X_i\geq 0$ on a probability space $(\Omega,\Fc,\Pr)$. We suppose that the net worth of each unit is subject to constraints, e.g. that are set by a regulator,  namely it needs  to belong to  a certain set  $\Ac$ of ``acceptable positions''. We consider that $\Ac$ is a convex cone so that  the functional $\rho:L^1(\Omega,\Fc,\Pr)\to \Rb$:
$$
\rho(\xi)=\inf\{m\;:\; \xi+m\in\Ac\}
$$is a coherent risk measure (see \cite{ADEH2}). 

Whenever the aggregation of the units' liabilities is possible,  the group is only required  to hold the capital  $\rho\(-\sum_{i=1}^NX_i\)$.  By convexity of $\rho$ we have that $\rho\(-\sum_{i=1}^NX_i\)\leq \sum_{i=1}^N\rho\(-X_i\)$, which reflects the fact that the group would achieve a lower required capital as compared with $N$ separated entities with the same liabilities.  In this situation, risk aggregation is beneficial because it reduces the capital.

In this paper we are assuming  that there are legal or geographical limitations that  prevent  risk transfers or aggregation of liability to take place with the aim of reducing the regulatory capital. Each business unit must face the regulatory requirements individually. At the group level nevertheless, it is possible to manage the available capital and make certain monetary transfers to compensate for losses occurring at time 1 within the different  business units. The set of all such possible monetary compensations  will be called admissible payoffs. 
For instance, when at the group level the available capital is $m>0$, then the set of admissible payoffs will be denoted by $\Ar^\Xb(m)$; it contains the nonnegative, $N$ dimensional random vectors $\Yb=(Y_1,...,Y_N)$ such that $\sum_iY_i=m$, and fulfilling some additional rules specifying  the payment priority of the different units (as given in Definition \ref{defi} below).

In this framework, unit $i$ receives a payoff $Y_i$ so that its net worth is $Y_i-X_i$. The lowest overall capital that the group needs to hold when it has liability given by the vector $\Xb=(X_1,...,X_N)$ is:
\begin{equation}\label{ka}
\Kc(\Xb):=\inf\{m\geq 0\;|\; \exists \Yb\in\Ar^\Xb(m),\forall i, Y_i-X_i\in\Ac\}. 
\end{equation}
In general, when the business units are facing such individual admissibility constraints, it is the case that the cost of capital $\Kc(\Xb)$ is higher than the minimal cost obtained with aggregating the risks, $\rho\(-\sum_{i=1}^NX_i\)$.  Hence, the existence of individual constraints reduces the benefit of being a group. In such circumstances the incentives are to organise the business differently, as a unique entity, or form some optimised subgroups, depending on the liability vector.  

The topic of this paper, is to characterise the acceptability sets $\Ac$ that satisfy the property
\begin{equation}\label{eq}
\Kc(\Xb)=\rho\(-\sum_{i=1}^NX_i\),\;\forall \Xb\in (L^\infty)^N.
\end{equation}
We will show that the relation (\ref{eq}) is rather restrictive. 
We call the corresponding  risk measures cohesive, as any group requires the same amount of capital as if it were a single entity,  despite the impossibility to aggregate liability to take advantage of the convexity of the regulatory constraint. When the risk measure is cohesive, the group benefits of the maximal diversification gain, as if it were a single entity, even with individual  capital constraints for the group members.

On the way of  characterising the cohesive risk measures, we will show how admissible payoffs can be designed in order to offset the liability for each business unit and achieve an acceptable net worth. 
%

Our problem formulation is connected to the topic of optimal risk transfers based on convex risk measures. Optimal risk transfers within a group is a topic that has been studied in a substantial body of literature. We refer to Heath and Ku  \cite{HeaKu04},  Barrieu and El Karoui \cite{BarrElK05}, \cite{BarrElK05a}, Jouini et al. \cite{JouSchTou08},  Filipovi\'c and Kupper \cite{FilKup08}, Burgert and R\"uschendorf \cite{BurRus06}, Embrechts et al. \cite{{EmbLiuMao}}. In these papers, the problem is formulated generally as
$$
\inf_{\boldsymbol \xi}\sum_{i=1}^N\rho_i(-\xi_i)
$$
over all vectors $\boldsymbol \xi=(\xi_1,\cdots,\xi_N)$ satisfying $\sum_{i=1}^N\xi_i=\eta$ for some $\eta$ fixed. Note that each business unit may use a specific risk measure in this framework.
 The main difference with the optimal risk transfer literature is to introduce individual risk admissibility constraints for every unit.  At the same time,  we consider that the liability at the level of each unit is not transferable among units and we introduce solvability constraints, namely that payments can only be made within the limits of the available capital. Further,  when the group is insolvent, we introduce fixed rules for how the payments are to be made. This framework is similar to the one we have introduced in \cite{CocDel20}, where the question addressed was the fairness  of insurance contracts in presence of default risk.  With similar rules for payments in bankruptcy and admissibility conditions for the payments, we have shown that it is not possible in general to perfectly offset the default risk exposure of the insured agents by proposing them a benefit participation. The question of when such offsetting payments can take place was not addressed there and the current paper also brings clarifications in that context.

\section{Setup and main definitions}
Let us  introduce the mathematical setup more clearly. We work in a two date  model: time 0 where everything is known and time 1, where randomness is present. Possible outcomes at time 1 are modelled as random variables on a probability space $(\Omega, \Fc, \Pr)$, considered to be atomless.    Unless otherwise specified, all equalities and inequalities involving random variables are to be considered in an $\Pr \;\as$ sense.
 The space of risks occurring at time 1 is considered to be $L^\infty(\Omega, \Fc, \Pr)$, simply denoted $L^\infty$, i.e.,  the collection of all essentially bounded random variables.

At time 0,  a regulator measures risks by means of a convex functional $\rho$ fulfilling the properties detailed below.
 \begin{defi} A mapping $\rho\colon L^\infty\rightarrow \Rr$ is called a coherent risk measure if the following properties hold:
 \begin{enumerate}
 \item if $ \xi\geq 0$ then $\rho(\xi)\le 0$;
 \item $\rho$ is convex: for all $\xi,\eta\in L^\infty$, $0\le \lambda\le 1$ we have $\rho(\lambda \xi +(1-\lambda)\eta)\le \lambda \rho(\xi) + (1-\lambda) \rho(\eta)$;
 \item for $a\in \Rr$ and $\xi\in L^\infty$, $\rho(\xi +a)=\rho(\xi) - a$;
 \item  for all $0\le \lambda\in \Rr$, $\rho(\lambda \xi)=\lambda \rho(\xi)$;
 \item (the Fatou property) for any sequence  $\xi_n\downarrow\xi$ (with $\xi_n\in L^\infty$) we have $\rho(\xi_n)\uparrow \rho(\xi)$.
 \end{enumerate}
 \end{defi}
 
 We refer to \cite{ADEH2}, \cite{Pisa}, \cite{FDbook} for an interpretation of these mathematical properties and how they apply to the framework of risk regulation. The main idea is that the regulator only accepts risks $\xi$  that satisfy $\rho(\xi)\leq 0$, hence we say that a random variable $\xi$ is acceptable whenever $\rho(\xi)\le 0$.  Remark that $\xi+\rho(\xi)$ is always acceptable, so that $\rho(\xi)$ is interpreted as the capital required for the risk $\xi$.  If $\rho$ is coherent then the acceptability set 
$$
\Ac:=\{\xi\mid \rho(\xi)\le 0\}
$$ 
is a convex cone.  

The Fatou property allows to apply convex duality theory and establishes a one-to-one correspondence between a coherent risk measure  and a convex closed set $\Sc$ consisting of probabilities which are absolutely continuous with respect to $\Pr$ (the so called  scenario set of $\rho$):
\begin{theo} If $\rho$ is coherent, there exists a convex closed set $\Sc\subset L^1$, consisting of probability measures, absolutely continuous with respect to $\Pr$, such that for all $\xi \in L^\infty$:
$$ \rho(\xi)=\sup_{\Qr\in\Sc} \Er_\Qr[-\xi].$$
Conversely each such a set $\Sc$ defines a coherent utility function.
\end{theo}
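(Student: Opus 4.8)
The plan is to treat the two implications separately; the converse is routine, and the forward direction is where the Fatou property does the essential work. For the converse, given a convex closed set $\Sc\subset L^1$ of probability measures with $\Qr\ll\Pr$, I would set $\rho(\xi)=\sup_{\Qr\in\Sc}\Er_\Qr[-\xi]$ and check the five axioms directly. Each map $\xi\mapsto\Er_\Qr[-\xi]$ is linear and bounded by $\|\xi\|_\infty$, so the supremum is finite, convex, and (since $\Er_\Qr[-\lambda\xi]=\lambda\Er_\Qr[-\xi]$ for $\lambda\ge 0$) positively homogeneous; property (3) follows from $\Er_\Qr[1]=1$; monotonicity, i.e. property (1), holds because $\xi\ge 0$ forces $\Er_\Qr[-\xi]\le 0$ for every $\Qr$; and the Fatou property follows because for $\xi_n\downarrow\xi$ monotone convergence gives $\Er_\Qr[-\xi_n]\uparrow\Er_\Qr[-\xi]$ for each $\Qr$, after which the two suprema (over $n$ and over $\Qr$) may be interchanged.

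For the forward implication the natural candidate is
$$\Sc:=\{\Qr:\Qr\ll\Pr,\ \Er_\Qr[-\xi]\le\rho(\xi)\ \text{for all }\xi\in L^\infty\},$$
which I would identify, after normalization, with the polar cone in $L^1$ of the acceptance set $\Ac=\{\xi:\rho(\xi)\le 0\}$. Properties (2) and (4) make $\rho$ sublinear, so $\Ac$ is a convex cone; property (1) gives $L^\infty_+\subseteq\Ac$, which forces every element of the polar to be nonnegative, hence the density of a positive measure, while pairing with the constants and invoking (3) normalizes these densities to probability measures. Granting that $\Ac$ is closed for the weak-$*$ topology $\sigma(L^\infty,L^1)$, the bipolar theorem gives $\Ac=\Ac^{\circ\circ}$, and expressing $\rho(\xi)=\inf\{m:\xi+m\in\Ac\}$ as the support functional of this bipolar yields exactly $\rho(\xi)=\sup_{\Qr\in\Sc}\Er_\Qr[-\xi]$, with $\Sc$ convex and closed.

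The crux, and the step I expect to be the main obstacle, is upgrading the Fatou property, stated only for monotone sequences, to full $\sigma(L^\infty,L^1)$-closedness of $\Ac$. Here I would apply the Krein--Smulian theorem, reducing the claim to closedness of $\Ac\cap\{\|\xi\|_\infty\le r\}$ for each $r$; on such norm-bounded sets $\sigma(L^\infty,L^1)$ is metrizable (after the harmless reduction to a separable sub-$\sigma$-algebra), so it suffices to treat sequences. The difficulty is that for $\xi_n\in\Ac$ with $\|\xi_n\|_\infty\le r$ and $\xi_n\to\xi$ weakly-$*$, weak-$*$ convergence does not supply the almost sure convergence the Fatou property requires. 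To bridge this gap I would pass to forward convex combinations $\zeta_k\in\mathrm{conv}\{\xi_k,\xi_{k+1},\dots\}$ converging $\Pr$-a.s. to $\xi$ (by a Koml\'os- or Mazur-type argument); these remain in $\Ac$ by convexity. Setting $g_m:=\esssup_{k\ge m}\zeta_k$, one has $g_m\downarrow\xi$ and $g_m\ge\zeta_m$, so the monotonicity of $\rho$ (a consequence of (1) and sublinearity) gives $\rho(g_m)\le\rho(\zeta_m)\le 0$; the Fatou property then forces $\rho(\xi)=\lim_m\rho(g_m)\le 0$, i.e. $\xi\in\Ac$, which completes the closedness argument and hence the representation.
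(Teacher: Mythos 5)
The paper itself gives no proof of this theorem: it is quoted as the classical representation result, with the reader referred to the cited literature, so your proposal has to be measured against the standard argument (Delbaen's proof for general probability spaces). Your converse direction is correct and routine. Your forward direction also follows the classical route --- the polar cone of $\Ac$, the bipolar theorem, the Krein--Smulian reduction to the norm-bounded slices $C_r:=\Ac\cap\{\|\xi\|_\infty\le r\}$, and then a sequential argument (forward convex combinations converging a.s., the sup-trick $g_m:=\esssup_{k\ge m}\zeta_k\downarrow\xi$, monotonicity, Fatou). That sequential core is exactly the right idea and is where the Fatou property does its work.

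The gap is the metrizability step. The weak$^*$ topology on norm-bounded subsets of $L^\infty$ is metrizable if and only if $L^1(\Omega,\Fc,\Pr)$ is separable, and the paper assumes only that the space is atomless, not that $\Fc$ is countably generated. Your parenthetical ``harmless reduction to a separable sub-$\sigma$-algebra'' is not harmless: a point in the weak$^*$ closure of $C_r$ is the limit of a net, not a priori of a sequence, and that net need not be measurable with respect to any fixed countably generated sub-$\sigma$-algebra; nor can you project onto one, since conditional expectation does not in general preserve membership in $\Ac$ (coherent risk measures are not dilatation monotone). So, as written, the passage to sequences is unjustified precisely at the point where it is needed; in the separable case your proof is complete, but not in the stated generality. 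The repair is standard and keeps your sequential argument intact: prove instead that $C_r$ is closed in $L^1$-norm (equivalently in probability --- a purely sequential, metrizable statement: from $\xi_n\to\xi$ in $L^1$ extract an a.s.\ convergent subsequence and run your sup-trick/Fatou argument; no convex combinations are even needed, since the subsequence already lies in $\Ac$); then note that a convex $L^1$-closed set is $\sigma(L^1,L^\infty)$-closed by Hahn--Banach, and that on $L^\infty$-bounded sets weak$^*$ convergence (testing against all of $L^1$) implies $\sigma(L^1,L^\infty)$ convergence (testing against $L^\infty$), so $C_r$ is weak$^*$ closed; finish with Krein--Smulian exactly as you do. With that substitution your proof is correct and coincides with the classical one.
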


We shall use the assumption that $\Sc$ is weakly compact, so that we will be able to replace the sup by a max. Indeed, as a direct application of James's theorem, weak compactness is equivalent to the nonemptyness of the subgradient of $\rho$ at any point: for every $\xi\in L^\infty$, $\nabla \rho(\xi)\neq \emptyset$, that is, there is a $\Qr\in\Sc$ with $\rho(\xi)=\Er_\Qr[-\xi]$.

An additional assumption that we will use is that the risk measure $\rho$ used by the regulator is commonotonic.
  
\begin{defi}  We say that two random variables $\xi,\eta$ are commonotonic if there exist a random variable $\zeta$ as well as two non-decreasing functions $f,g\colon \Rr\rightarrow\Rr$ such that $\xi=f(\zeta)$ and $\eta=g(\zeta)$.
\end{defi}

\begin{defi} We say that $\rho\colon L^\infty\rightarrow\Rr$ is commonotonic if for each couple $(\xi,\eta)$ of commonotonic random variables we have $\rho(\xi+\eta)=\rho(\xi)+\rho(\eta)$.
\end{defi}

\begin{rem}  Loosely speaking, two  risks are commonotonic if they are bets on the same event. Indeed, $\xi$ and $\eta$ being nondecreasing functions of $\zeta$, neither of them is a hedge against the other. The commonotonicity of $\rho$ can therefore be seen as a translation of the rule:  if there is no diversification, there is also no gain in putting these claims together.
\end{rem}
\begin{rem}  If $\rho$ is commonotonic then for nonnegative random variables $f,g\in L^\infty$ satisfying $\Pr[f>0,g>0]=0$ we have that $\rho(f+g)=\rho(f)+\rho(g)$. In particular for $\xi\in L^\infty$: $\rho(\xi)=\rho(\xi^+)+\rho(-\xi^-)$. We also have that for $\Qr\in\Sc$ satisfying $\rho(\xi)=\Er_\Qr[-\xi]$, necessarily also $\rho(\xi^+)=\Er_\Qr[-\xi^+]$ and $\rho(-\xi^-)=\Er_\Qr[\xi^-]$. This easily follows  from the subadditivity of $\rho$.
\end{rem}

All notions above are standard in the theory of risk measures. We now introduce some definitions that are specific to the framework of this paper, that is the one  of a group consisting of $N$ distinct units under regulatory supervision.

 \begin{defi}\label{defi} We denote by $\Xc$ the space of $N$ dimensional random variables which are positive and bounded. 
 We consider a liability vector $\Xb=(X_1,\cdots,X_N)\in \Xc$  and consider $m\in\Rb_+$. The class of admissible payoffs from a total capital $m$,  corresponding to the liability $\Xb$ is defined as:
 $$
 \Ar^\Xb(m)=
 \left \{ \Yb\in \Xc \biggm | \sum_{i=1}^NY_i=m;\; \forall k\in\{1,...,N\}:
 \begin{array}{l}
 \text{ if  $\sum_{i=1}^N X_i>m$ then $Y_k=\frac{X_k}{\sum_{i=1}^NX_i} m$} \\
  \text{ if  $\sum_{i=1}^N X_i\leq m$ then $Y_k\geq X_k$} 
  \end{array}
 \right  \}.
 $$
 \end{defi}
 
  Admissible payoffs respect some rules as follows. If the capital $m$ is less than the aggregate liability $\sum_i X_i$, the group defaults. In this case, all liabilities have the same priority of payment, regardless the unit to which they are corresponding.  Hence, in default, all capital $m$  is distributed towards the units proportionally to their liability size.  Whenever the group is solvent at the aggregate level ($\sum_i X_i\leq m$), every unit must be solvent as well, hence the central unit distributes for each unit $i$  a payment that should cover the liability $X_i$. As there is a surplus in this case, some units  will get more than their liability as a payoff. 
    
\begin{exa} Let us consider that each business unit receives some constant proportion of the surplus $(m-\sum_i X_i)^+$.  The corresponding admissible payoffs  (that we shall call standard payoffs) are given as follows:
 \begin{align}\label{formYi}
 Y_k&=\[X_k+\alpha_k\(m-\sum_i X_i\)\]\one_{\{\sum_i X_i\leq m\}}+X_k\(\frac{m}{\sum_i X_i}\)\one_{\{\sum_i X_i> m\}},\quad i=1,\ldots,N
\end{align}
where  each $\alpha_k$ is a nonnegative constant  and  $\sum_{i=1}^N\alpha_i=1$. 
 \end{exa}

\begin{defi}\label{defiOptPay}
Given a level of capital $m$,  an  \textit{offsetting payoff} corresponding to the liability $\Xb\in\Xc$, is a vector of random variables $\Yb\in\Ar^\Xb(m)$ satisfying 
\begin{equation}
Y_i-X_i\in \Ac \text{ for all $i\in\{1,...,N\}$.}
\end{equation} that is, the net worth of any unit is acceptable.
\end{defi}
Offsetting payoffs cannot be achieved when there is not sufficient overall capital $m$.   The analysis in the next section will reveal the fact that the offsetting payoffs can never be achieved  when the capital $m$ is less than $K:=\rho(-\sum_i X_i)$, that is the minimal capital required for the aggregated liability. Also, in general, holding  the capital $K$ does not guarantee the existence of these payoffs, so that the group may be required to hold more capital. This justifies to introduce the following additional  definition: 

\begin{defi}
A coherent risk measure $\rho$ is called \textit{cohesive}  if for any risk vector $\Xb\in\Xc$ there exists $\Yb\in\Ar^\Xb(\rho(-\sum_{i=1}^N X_i))$ such that
\begin{equation}
Y_i-X_i\in \Ac \text{ for all $i\in\{1,...,N\}$.}
\end{equation}
\end{defi}

\section{Properties of  offsetting payoffs with minimal capital}\label{Secoffsetting}
We use the setup and notation from the previous section; in particular $\rho$ is a coherent risk measure that is commonotonic, with a corresponding scenario set $\Sc$ assumed to be weakly compact. Also, we shall consider a fixed liability vector $\Xb\in\Xc$  and denote the aggregated group liability by
$$
S^\Xb:=\sum_{i=1}^N X_i.
$$
Also we denote
$$
K:=\rho\(-\sum_{i=1}^N X_i\)
$$
that is the minimum capital for the aggregated liability. Under individual regulatory constraints for the business units, the minimum regulatory capital for the group is denoted $\Kc(\Xb)$ and its expression was introduced  in (\ref{ka}). 

 First observation is that the group needs to hold at least a capital of $K$. 
\begin{lemma}\label{lemi}
The minimal capital for the group $\Kc(\Xb)$ satisfies  $\Kc(\Xb)\geq K$.
\end{lemma}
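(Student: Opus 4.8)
The plan is to show that every capital level $m$ admitting an offsetting payoff already satisfies $m\geq K$; taking the infimum then yields $\Kc(\Xb)\geq K$ directly from the definition in (\ref{ka}). If no such $m$ exists the infimum is $+\infty$ and the inequality is trivial, so I would assume the feasible set is nonempty and fix some feasible $m$ together with a witnessing $\Yb\in\Ar^\Xb(m)$ for which $Y_i-X_i\in\Ac$ for every $i$. By definition of the acceptability set this says exactly $\rho(Y_i-X_i)\le 0$ for each $i$, and since both $Y_i$ and $X_i$ are bounded, each $Y_i-X_i$ lies in $L^\infty$ so that $\rho$ applies.

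The crucial tool is the subadditivity of $\rho$, which I would first record as a consequence of the axioms: combining convexity (property 2 with $\lambda=\tfrac12$) with positive homogeneity (property 4) gives $\rho(\xi+\eta)\le\rho(\xi)+\rho(\eta)$ for all $\xi,\eta\in L^\infty$. Applying this across the $N$ units yields $\rho\(\sum_{i=1}^N(Y_i-X_i)\)\le\sum_{i=1}^N\rho(Y_i-X_i)\le 0$. Next I would invoke the only structural feature of $\Ar^\Xb(m)$ that is needed, namely the budget identity $\sum_{i=1}^N Y_i=m$, to rewrite $\sum_{i=1}^N(Y_i-X_i)=m-S^\Xb$. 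The cash-invariance property 3 then gives $\rho(m-S^\Xb)=\rho(-S^\Xb)-m=K-m$, so the previous inequality reads $K-m\le 0$, i.e. $m\geq K$. As $m$ was an arbitrary feasible capital, the infimum defining $\Kc(\Xb)$ is bounded below by $K$.

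I do not expect any genuine obstacle here: the argument is a direct application of subadditivity and cash-invariance, and notably it uses neither the priority rules encoded in $\Ar^\Xb(m)$ (only the aggregate constraint $\sum_i Y_i=m$ enters) nor the additional hypotheses of commonotonicity and weak compactness of $\Sc$. The sole point worth flagging explicitly is the degenerate case in which no offsetting payoff exists at any capital level, which is covered by the convention $\inf\emptyset=+\infty$ and makes the bound hold vacuously.
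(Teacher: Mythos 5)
Your proof is correct and follows exactly the paper's argument: take any feasible $m$ with offsetting payoff $\Yb$, use acceptability $\rho(Y_i-X_i)\le 0$, subadditivity, the budget identity $\sum_i Y_i=m$, and cash invariance to get $K-m\le 0$, then pass to the infimum. The only additions beyond the paper's proof are careful bookkeeping (the $\inf\emptyset=+\infty$ convention and the derivation of subadditivity from convexity plus positive homogeneity), which are harmless and implicit in the original.
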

\begin{proof}
We  assume $\Yb\in \Ar^\Xb(m)$ offsetting, that is $Y_i-X_i\in \mathcal A$, or $\rho(Y_i-X_i)\leq 0$ for all $i\in\Nc$. From this  and the sub-additivity of $\rho$, we get:
\begin{equation}\label{eqi}
0\geq  \sum_{i=1}^N \rho(Y_i-X_i)\geq \rho\left(\sum_{i=0 }^N (Y_i-X_i)\right)= \rho\(m-\sum_{i=0}^NX_i\)=K-m. 
\end{equation}

\end{proof}
We now investigate what happens if the company holds a capital of $K$.
Is it possible to split the capital  $K$ in a vector of admissible payoffs $\Yb$, so that the net worth $Y_i-X_i$ of each unit $i$ is acceptable? The existence of such payoffs is not granted. Below we show that this condition is rather restrictive and we characterise the situations where the answer to the question is positive.  

A first remark is that the existence of offsetting payoffs with capital $K$ requires that no further improvement of the net worth of the business units can be reached by further diversification. This is what the next lemma says.
\begin{lemma}\label{lemii}
If the  payoff vector $\Yb\in \Ar^\Xb(K)$ is offsetting the liability $\Xb$, then:
\begin{equation}\label{eqii}
\sum_{i=0}^N\rho(Y_i-X_i)=\rho\(\sum_{i=0}^N (Y_i-X_i)\).
\end{equation}
\end{lemma}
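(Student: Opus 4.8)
The plan is to specialise the inequality chain already derived in the proof of Lemma \ref{lemi} to the critical capital level $m=K$, and to observe that its two outer terms then coincide, forcing every intermediate inequality to become an equality.

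First I would record the two ingredients. Since $\Yb$ is offsetting, each net worth satisfies $Y_i-X_i\in\Ac$, i.e. $\rho(Y_i-X_i)\le 0$, so that $\sum_i\rho(Y_i-X_i)\le 0$. On the other hand, subadditivity of $\rho$ (which follows from convexity together with positive homogeneity, properties (2) and (4)) gives $\sum_i\rho(Y_i-X_i)\ge\rho\(\sum_i(Y_i-X_i)\)$. Finally, because $\Yb\in\Ar^\Xb(K)$ enforces $\sum_iY_i=K$, I have $\sum_i(Y_i-X_i)=K-S^\Xb$, and cash-additivity (property (3)) collapses this aggregate term to $\rho(K-S^\Xb)=\rho(-S^\Xb)-K=K-K=0$.

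Combining these three facts yields the sandwich $0\ge\sum_i\rho(Y_i-X_i)\ge\rho\(\sum_i(Y_i-X_i)\)=0$, so both inequalities are saturated. In particular the subadditivity step holds with equality, which is exactly the claimed identity (\ref{eqii}); as a byproduct each of the two sides equals $0$.

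There is no substantial obstacle here: the lemma merely asserts that the slack left over in Lemma \ref{lemi} vanishes precisely when the group holds the minimal capital $K$. The only delicate points are the sign convention in cash-additivity, which is what makes the aggregate term collapse to $0$, and the explicit use of the budget constraint $\sum_iY_i=K$ built into the definition of $\Ar^\Xb(K)$; once these are in place the equality is forced.
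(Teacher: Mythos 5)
Your proposal is correct and follows exactly the paper's own argument: the paper's proof of Lemma \ref{lemii} likewise takes $m=K$ in the inequality chain (\ref{eqi}) from Lemma \ref{lemi}, observes that both outer terms become $0$, and concludes that all intermediate inequalities are equalities, noting (as you do) that admissibility enters only through $\sum_i Y_i=K$. Your write-up just makes the sandwich argument and the role of cash-additivity more explicit.
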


\begin{proof} The proof is similar to the one of Lemma \ref{lemi}. It suffices to take $m=K$ in (\ref{eqi}), therefore we need to have only equalities. We remark that the admissibility of $\Yb$ does play a role in establishing this result only through the condition $\sum_i Y_i=K$
\end{proof}

\begin{prop}\label{propequiv}Consider a payoff vector $\Yb\in \Ar^\Xb(K)$. The following are equivalent:
\begin{itemize}
\item[(i)] The  payoff vector $\Yb$ is offsetting the liability $\Xb$.
\item[(ii)] If  $\Qr^*\in\nabla S^\Xb$, then for all $i\in\{1,...,N\}$: 
\begin{align}\label{uY-X}
\rho(Y_i-X_i)& =\Er_{\Qr^*} [X_i-Y_i]=0.
\end{align}
\item[(iii)] Relation (\ref{eqii}) holds and  for all $i\in\{1,...,N\}$: 
\begin{align}\label{uY-Xbis}
\Er_{\Qr^*} [X_i-Y_i]=0.
\end{align}
for some    $\Qr^*\in\nabla S^\Xb$.

\item[(iv)]The following hold: the minimal group capital $\Kc(\Xb)$ satisfies
\begin{align*}
K=\Kc(\Xb)
\end{align*}
and $\Yb$ is a solution of 
$$
\inf_{\xi\in \Ar^\Xb(K)}\sum_{i=0}^N \rho(\xi_i-X_i).
$$
\end{itemize}
\end{prop}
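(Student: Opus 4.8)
Throughout put $Z_i := Y_i - X_i$. Since $\Yb \in \Ar^\Xb(K)$ has $\sum_{i=1}^N Y_i = K$, we get $\sum_{i=1}^N Z_i = K - S^\Xb$, and cash-invariance gives $\rho(\sum_i Z_i) = \rho(-S^\Xb) - K = 0$. I read $\nabla S^\Xb$ as the set of $\Qr^* \in \Sc$ attaining $\rho(-S^\Xb)$, i.e. with $\Er_{\Qr^*}[S^\Xb] = K$; by weak compactness of $\Sc$ (James's theorem) this set is nonempty, and its elements are exactly the scenarios attaining $\rho(\sum_i Z_i)=0$. The plan is to close the cycle (i)$\Rightarrow$(ii)$\Rightarrow$(iii)$\Rightarrow$(i), all three resting on one elementary device---finitely many numbers of a single sign whose sum is $0$ must vanish termwise---and then to attach (iv) separately.

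For (i)$\Rightarrow$(ii): offsetting means $\rho(Z_i) \le 0$ for each $i$, while Lemma \ref{lemii} gives $\sum_i \rho(Z_i) = \rho(\sum_i Z_i) = 0$, so every $\rho(Z_i) = 0$. For any $\Qr^* \in \nabla S^\Xb$ the dual inequality $\Er_{\Qr^*}[X_i - Y_i] \le \rho(Z_i) = 0$ holds termwise, and summing gives $\sum_i \Er_{\Qr^*}[X_i - Y_i] = \Er_{\Qr^*}[S^\Xb] - K = 0$, so the same device forces $\Er_{\Qr^*}[X_i - Y_i] = 0$ for each $i$; this is (ii). Then (ii)$\Rightarrow$(iii) is immediate, since $\rho(Z_i)=0$ for all $i$ turns $\sum_i \rho(Z_i) = 0 = \rho(\sum_i Z_i)$ into (\ref{eqii}) and supplies the per-unit equalities for any $\Qr^* \in \nabla S^\Xb$. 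For (iii)$\Rightarrow$(i): the dual inequality now reads $\rho(Z_i) \ge \Er_{\Qr^*}[X_i - Y_i] = 0$ for each $i$, while (\ref{eqii}) gives $\sum_i \rho(Z_i) = \rho(\sum_i Z_i) = 0$; a family of nonnegative numbers summing to $0$ vanishes, so $\rho(Z_i) = 0 \le 0$ for all $i$ and $\Yb$ is offsetting.

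It remains to incorporate (iv). The implication (i)$\Rightarrow$(iv) is short: an offsetting $\Yb \in \Ar^\Xb(K)$ forces $\Kc(\Xb) \le K$ by definition (\ref{ka}), which with Lemma \ref{lemi} gives $K = \Kc(\Xb)$; moreover subadditivity gives $\sum_i \rho(\xi_i - X_i) \ge \rho(K - S^\Xb) = 0$ for every $\xi \in \Ar^\Xb(K)$, so the infimum is $0$ and is attained at $\Yb$, whose value is $0$ by (ii). The converse (iv)$\Rightarrow$(i) is the step I expect to be the main obstacle. From $K = \Kc(\Xb)$, and the infimum in (\ref{ka}) being attained, one obtains an offsetting payoff at level $K$, so the infimum above equals $0$; a minimizer $\Yb$ then satisfies $\sum_i \rho(Z_i) = 0 = \rho(\sum_i Z_i)$, i.e. (\ref{eqii}), and equality in subadditivity produces a common scenario $\Qr^* \in \nabla S^\Xb$ with $\rho(Z_i) = \Er_{\Qr^*}[X_i - Y_i]$ for all $i$. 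The difficulty is genuine: (\ref{eqii}) controls only the sum $\sum_i \rho(Z_i) = 0$, whereas (i) needs each $\rho(Z_i) = 0$, and a priori a minimizer might balance a unit with $\rho(Z_i) > 0$ against one with $\rho(Z_j) < 0$. To pass from the aggregate identity to the componentwise one I would exploit the rigidity of $\Ar^\Xb(K)$ on the default set $\{S^\Xb > K\}$, where admissibility pins $Y_i = \frac{X_i}{S^\Xb} K$, so the negative part of each $Z_i$ is fixed and any offsetting $\Zb$ coincides with $\Yb$ there; feeding in the per-unit equalities $\Er_{\Qr^*}[X_i - Z_i] = 0$ that (i)$\Rightarrow$(ii) yields for $\Zb$, the target $\Er_{\Qr^*}[X_i - Y_i] = 0$ reduces to showing that the extremal $\Qr^*$ charges no weight on the surplus $(K - S^\Xb)^+$ redistributed on the solvent set. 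Securing this reduction---where the commonotonicity of $\rho$ and the particular choice of maximizing scenario must be brought to bear, and which as stated seems to demand that one restrict to offsetting optimizers---is the crux; once the per-unit condition is in hand, (iii)$\Rightarrow$(i) closes the loop.
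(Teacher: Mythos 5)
Your cycle (i)$\Rightarrow$(ii)$\Rightarrow$(iii)$\Rightarrow$(i) is correct, and it is essentially the paper's own argument in streamlined form: where the paper routes these steps through its Proposition \ref{proplin} (whose implication $(1)\Rightarrow(3)$ is proved by exactly your device of one-signed terms summing to zero), you apply the dual inequality $\Er_{\Qr^*}[X_i-Y_i]\le\rho(Y_i-X_i)$ directly, together with Lemma \ref{lemii} and the normalisation $\Er_{\Qr^*}[S^\Xb]=K$; your point that weak compactness is needed so that $\nabla S^\Xb\neq\emptyset$ (to pass from the conditional statement (ii) to (iii)) is also right, and your proof of (i)$\Rightarrow$(iv) coincides with the paper's.

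Concerning (iv)$\Rightarrow$(i), which you single out as the crux and leave open: you should know that the paper does not prove it either. Its proof of ``(i)$\Leftrightarrow$(iv)'' shows only that the infimum is $\ge 0$, that $K=\Kc(\Xb)$ yields the \emph{existence} of offsetting vectors (already presuming the infimum in (\ref{ka}) is attained), and that every offsetting vector is a minimiser --- that is, your (i)$\Rightarrow$(iv) plus an existence statement; it never shows that the \emph{given} minimiser $\Yb$ is offsetting. In fact the implication is false as stated, so the repair you sketch (and any other) cannot succeed. Take an atomless space, $\rho(\xi)=\Er_\Pr[-\xi]$ (coherent, commonotonic, with weakly compact scenario set $\Sc=\{\Pr\}$), $N=2$ and $X_1=X_2=\one_A$ with $\Pr[A]=1/2$. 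Then $S^\Xb=2\cdot\one_A$, $K=\rho(-S^\Xb)=1$, and the default set is $\{S^\Xb>K\}=A$. The symmetric standard payoff $Y_1=Y_2=1/2$ everywhere is offsetting, so $\Kc(\Xb)=K$; and by linearity
$$
\sum_{i=1}^2\rho(\xi_i-X_i)=\Er_\Pr[S^\Xb-K]=0 \quad\text{for every }\boldsymbol\xi\in\Ar^\Xb(K),
$$
so \emph{every} admissible payoff is a minimiser. But the admissible payoff with $Y_1=0$, $Y_2=1$ on $A^c$ (and $Y_1=Y_2=1/2$ on $A$) has $\rho(Y_1-X_1)=\Er_\Pr[\tfrac12\one_A]=\tfrac14>0$: it satisfies (iv) yet is not offsetting. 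Note that here $\Qr^*=\Pr$ charges the solvency set where the surplus is redistributed, which is precisely what defeats your proposed reduction. So your instinct that this direction hides a genuine difficulty was correct; what is actually true, and what both your argument and the paper's establish, is the equivalence of (i), (ii), (iii) for a fixed $\Yb$, together with the existential statement that offsetting payoffs in $\Ar^\Xb(K)$ exist if and only if $K=\Kc(\Xb)$ (with attainment in (\ref{ka})), in which case they are exactly among the solutions of the minimisation problem; (iv) should be weakened or restated in that existential sense.
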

\begin{proof}
(i)$\Rightarrow $(ii). We can apply Proposition \ref{proplin} below, taking $E=\{1,\ldots,N\}$ and  $\xi_i:=Y_i-X_i$. Indeed, by Lemma \ref{lemii}, the condition in Proposition \ref{proplin} (1) is fulfilled and it is equivalent to (3). \\
(ii)$\Rightarrow$(i). Obvious.\\
(ii)$\Rightarrow$(iii). The relation (\ref{uY-X}) implies that relation (\ref{eqii}) holds as it can easily checked; 
also (\ref{uY-X}) implies (\ref{uY-Xbis}) obviously.\\
(iii)$\Rightarrow$(ii). If relation (\ref{eqii}) holds, we can apply Proposition \ref{proplin} below to deduce that  $\rho(Y_i-X_i) =\Er_{\Qr^*} [X_i-Y_i]$, for all   $\Qr^*\in\nabla S^\Xb$ and these expressions are null, again by (iii). \\
(i)$\Leftrightarrow$(iv). In general, $\sum_{i=0}^N \rho(\xi_i-X_i)\geq  \rho\(\sum_{i=0}^N(\xi_i-X_i)\)$ so that
$$
\inf_{\xi\in \Ar^\Xb(K)}\sum_{i=0}^N \rho(\xi_i-X_i)\geq \rho(K-S^\Xb)=0.
$$
The condition $K=\Kc(\Xb)$ means that there are offsetting payoff vectors with a capital $K$, while from the above inequality we see that whenever there exist such offsetting payoff vectors, they are solving the minimisation problem (as $\rho(Y_i-X_i)=0$ whenever $\Yb$ offsetting $\Xb$). Hence the proof is complete.
\end{proof}

\begin{prop}\label{proplin}
We consider some random variables $(\xi_i)_{i\in E}$, with $E$ some countable set and let $\Qr^E\in \nabla{\rho\(\sum_{i\in E} \xi_i\)}$. 
The following are equivalent: 
\begin{itemize}

\item[(1)] 
\begin{equation}\label{xilin0}
\rho\left (\sum_{i\in E}\xi_i\right)= \sum_{i\in E}\rho(\xi_i).
\end{equation} 
\item[(2)] For all $ \lambda_i\geq 0$, $i\in E$
\begin{align}\label{xilin}
\rho\left (\sum_{i\in E} \lambda_i\xi_i\right)&=\sum_{i\in E}\lambda_i \rho\left (\xi_i\right).
\end{align}
\item[(3)] For all  $i\in E$
\begin{align}\label{xilin2}
\rho\left (\xi_i\right)&=\Er_{\Qr^E}\left [-\xi_i\right].
\end{align}
\end{itemize}
\end{prop}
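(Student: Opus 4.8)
The plan is to reduce all three equivalences to the \emph{fundamental inequality} attached to the dual representation of $\rho$. Since $\rho(\zeta)=\sup_{\Qr\in\Sc}\Er_\Qr[-\zeta]$, every $\Qr\in\Sc$ satisfies $\Er_\Qr[-\zeta]\le\rho(\zeta)$ for all $\zeta\in L^\infty$, and the standing hypothesis $\Qr^E\in\nabla\rho\(\sum_{i\in E}\xi_i\)$ says precisely that this inequality becomes an equality at the point $\zeta=\sum_{i\in E}\xi_i$. I would first prove the full equivalence (1)$\Leftrightarrow$(3) by a sandwich argument, and then close the cycle through (3)$\Rightarrow$(2)$\Rightarrow$(1).

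For (1)$\Leftrightarrow$(3) the key is the chain
$$\rho\(\sum_{i\in E}\xi_i\)=\Er_{\Qr^E}\[-\sum_{i\in E}\xi_i\]=\sum_{i\in E}\Er_{\Qr^E}[-\xi_i]\le\sum_{i\in E}\rho(\xi_i),$$
where the first equality uses $\Qr^E\in\nabla\rho\(\sum_{i\in E}\xi_i\)$, the second is linearity of $\Er_{\Qr^E}$, and the inequality applies the fundamental inequality termwise. Each summand $\rho(\xi_i)-\Er_{\Qr^E}[-\xi_i]$ is nonnegative, so the defect in this chain equals $\sum_{i\in E}\(\rho(\xi_i)-\Er_{\Qr^E}[-\xi_i]\)\ge 0$. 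Now (1) is exactly the statement that this defect vanishes; being a sum of nonnegative terms, it vanishes if and only if every term does, which is precisely (3). This yields both implications at once.

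For (3)$\Rightarrow$(2) I would sandwich again, now with arbitrary weights $\lambda_i\ge 0$:
$$\sum_{i\in E}\lambda_i\rho(\xi_i)=\sum_{i\in E}\lambda_i\Er_{\Qr^E}[-\xi_i]=\Er_{\Qr^E}\[-\sum_{i\in E}\lambda_i\xi_i\]\le\rho\(\sum_{i\in E}\lambda_i\xi_i\)\le\sum_{i\in E}\lambda_i\rho(\xi_i),$$
where the first equality is (3), the middle inequality is the fundamental inequality (as $\Qr^E\in\Sc$), and the last inequality is positive homogeneity together with subadditivity of $\rho$. The two extremes coincide, forcing equality throughout, which is (2). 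Finally (2)$\Rightarrow$(1) is immediate on taking $\lambda_i\equiv 1$, so the cycle (1)$\Leftrightarrow$(3)$\Rightarrow$(2)$\Rightarrow$(1) is complete.

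The one point needing genuine care, and the main obstacle, is the countability of $E$: the argument tacitly uses that $\Er_{\Qr^E}$ commutes with the infinite sum and that $\rho$ is subadditive over countably many terms. For the application $E=\{1,\dots,N\}$ this is automatic, but in general I would pass through finite partial sums, justifying the expectation identity by dominated convergence under the probability $\Qr^E$, and passing the finite subadditivity $\rho\(\sum_{i\le n}\xi_i\)\le\sum_{i\le n}\rho(\xi_i)$ to the limit via the $1$-Lipschitz continuity of $\rho$ for $\|\cdot\|_\infty$ (equivalently the Fatou property), noting that $\sum_{i\in E}\xi_i\in L^\infty$ is implicit in the hypothesis $\Qr^E\in\nabla\rho\(\sum_{i\in E}\xi_i\)$.
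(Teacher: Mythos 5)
Your proof is correct and follows essentially the same route as the paper's: a sandwich argument built on the dual representation of $\rho$, with the equivalence closed through (3)$\Rightarrow$(2)$\Rightarrow$(1) exactly as in the paper, including the same use of convexity plus positive homogeneity for the upper bound and linearity of $\Er_{\Qr^E}$ for the lower bound. The only (minor) differences are that in (1)$\Leftrightarrow$(3) you rely solely on the inequality $\Er_{\Qr^E}[-\xi_i]\le\rho(\xi_i)$, whereas the paper introduces individual maximizers $\Qr^{\{i\}}\in\Sc$ with $\rho(\xi_i)=\Er_{\Qr^{\{i\}}}[-\xi_i]$ (tacitly invoking the weak compactness of $\Sc$), so your variant is marginally leaner; and your closing remarks on the countable-sum technicalities address a point the paper passes over in silence.
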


\begin{proof}
We show $(1)\Rightarrow (3)$.
For any  $i\in E$, let  $\Qr^{\{i\}}\in \Sc$ be such that  $\rho\(\xi_i\)=\Er_{\Qr^{\{i\}}}\[-\xi_i\].$ Then:
$$
\sum_{i\in E}\rho(\xi_i)=\sum_{i\in E}\Er_{\Qr^{\{i\}}}\[-\xi_i\]\geq \sum_{i\in E}\Er_{\Qr^E}\[-\xi_i\]=\rho\(\sum_{i\in E} \xi_i\).
$$
From (\ref{xilin0}), it follows that we must have only equalities above, hence:
$$
\sum_{i\in E}\left (\Er_{\Qr^{\{i\}}}[-\xi_i]- \Er_{\Qr^E}[-\xi_i]\right )=0,
$$
which implies (as all terms in the above sum are nonnegative) that  $\rho\(\xi_i\)=\Er_{\Qr^{\{i\}}}[-\xi_i] = \Er_{\Qr^E}[-\xi_i]$. 

Now, we show $(3)\Rightarrow (2)$.
Using the linearity of the expectation and (\ref{xilin2}) we obtain:
\begin{equation}\label{eq1}
\rho\left (\sum_{i\in E} \lambda_i\xi_i\right)\geq \Er_{ \Qr^E}\left [-\sum_{i\in E} \lambda_i\xi_i\right]=\sum_{i\in S} \lambda_i\rho(\xi_i).
\end{equation}
On the other hand, $\rho$ being  convex  we also have for all $\lambda_i \geq 0$:
\begin{equation}\label{eq2}
\rho\left (\sum_{i\in E} \lambda_i\xi_i\right)\leq \sum_{i\in E} \lambda_i\rho(\xi_i)
\end{equation}
Combining (\ref{eq1}) and (\ref{eq2}) we get the equality (\ref{xilin}). 

The implication $(2)\Rightarrow (1)$ is trivial, hence the proof is complete.

\end{proof}

Proposition \ref{propequiv} (ii) emphasizes that a payoff vector $\Yb$ satisfying  $\Er_{\Qr^*}[X_i-Y_i]=0$ with  $\Qr^*\in\nabla \rho(-S^\Xb)$ is a potential candidate to be an offsetting payoff (necessary condition). It follows that  when payoffs are standard, i.e., as in (\ref{formYi}),  the proportions $(\alpha_i)$ can be identified via these equalities. Their expressions are given below in Proposition \ref{offsettingStandardC},  together with another necessary and sufficient condition for these to be indeed offsetting vectors. To be noted that this time we will use commonotonicity of the risk measure $\rho$ to obtain this condition, the previous results remaining true also when $\rho$ not commonotonic.

\begin{prop}\label{offsettingStandardC} The minimal capital for the group $\Kc(\Xb)$ satisfies  $\Kc(\Xb)= K$ if and only if
\begin{equation}\label{cnsoff}
\rho\(-\(K-S^\Xb\)^-\)=\sum_{i=1}^N\rho\(-\frac{X_i}{S^\Xb}(K-S^\Xb)^-\).
\end{equation}
If this condition is satisfied, then $\Yb$ is offsetting $\Xb$, where $\Yb$  is a  standard payoff (see  (\ref{formYi})) with
\begin{align*}
\forall i:\quad\alpha_i:&=\frac{\Er_{\Qr^*}\[ \frac{X_i}{S^\Xb} (K-S^\Xb)^-\]}{\Er_{\Qr^*}\[ (K-S^\Xb)^+\]}
=\frac{\Er_{\Qr^*}\[ \frac{X_i}{S^\Xb} (S^\Xb-K)^+\]}{\Er_{\Qr^*}\[ (S^\Xb-K)^+\]}
\end{align*}
where $\Qr^*\in\nabla \rho(-S^\Xb)$. 
\end{prop}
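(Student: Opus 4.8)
The plan is to pin down a single scenario $\Qr^*\in\nabla\rho(-S^\Xb)$ that is simultaneously optimal for all the positions that occur, and then read (\ref{cnsoff}) off Proposition \ref{proplin}. First I would record the pointwise shape of the net worth. For $\Yb\in\Ar^\Xb(K)$ put $W_i:=Y_i-X_i$. On $\{S^\Xb>K\}$ admissibility forces $Y_i=\frac{X_i}{S^\Xb}K$, hence $W_i=-\frac{X_i}{S^\Xb}(K-S^\Xb)^-$ there, whereas on $\{S^\Xb\le K\}$ one has $Y_i\ge X_i$, so $W_i\ge 0$. Thus the negative part $W_i^-=\frac{X_i}{S^\Xb}(K-S^\Xb)^-$ is common to every admissible payoff and $W_i^+$ is carried by $\{S^\Xb\le K\}$; as $W_i^+$ and $W_i^-$ have disjoint supports, commonotonicity yields $\rho(W_i)=\rho(W_i^+)+\rho(-\frac{X_i}{S^\Xb}(K-S^\Xb)^-)$. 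Next, fixing $\Qr^*\in\nabla\rho(-S^\Xb)$ so that $\Er_{\Qr^*}[S^\Xb]=K$, cash-invariance gives $\rho(K-S^\Xb)=0=\Er_{\Qr^*}[-(K-S^\Xb)]$, i.e. $\Qr^*$ attains the supremum for $\zeta:=K-S^\Xb$; applying the commonotonicity Remark to $\zeta$ then gives $\rho((K-S^\Xb)^+)=-\Er_{\Qr^*}[(K-S^\Xb)^+]$ and $\rho(-(K-S^\Xb)^-)=\Er_{\Qr^*}[(K-S^\Xb)^-]$, so $\Qr^*\in\nabla\rho(-(K-S^\Xb)^-)$ and, since $\Er_{\Qr^*}[K-S^\Xb]=0$, the two numbers $\Er_{\Qr^*}[(K-S^\Xb)^{\pm}]$ coincide.

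\emph{Necessity.} If $\Kc(\Xb)=K$, Proposition \ref{propequiv} provides an offsetting $\Yb\in\Ar^\Xb(K)$, and its part (ii) gives $\rho(W_i)=\Er_{\Qr^*}[-W_i]=0$. Hence $\Qr^*$ attains the supremum for each $W_i$, and the commonotonicity Remark applied to $W_i$ gives $\rho(-W_i^-)=\Er_{\Qr^*}[W_i^-]=\Er_{\Qr^*}[\frac{X_i}{S^\Xb}(K-S^\Xb)^-]$. Summing over $i$, using $\sum_i\frac{X_i}{S^\Xb}=1$ and $\Er_{\Qr^*}[(K-S^\Xb)^-]=\rho(-(K-S^\Xb)^-)$ from the first paragraph, produces exactly (\ref{cnsoff}).

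\emph{Sufficiency and the explicit payoff.} Conversely, suppose (\ref{cnsoff}) holds. Since $-(K-S^\Xb)^-=\sum_i(-\frac{X_i}{S^\Xb}(K-S^\Xb)^-)$, relation (\ref{cnsoff}) is precisely hypothesis (1) of Proposition \ref{proplin} for $\xi_i:=-\frac{X_i}{S^\Xb}(K-S^\Xb)^-$; taking $\Qr^E=\Qr^*$ (admissible because $\Qr^*\in\nabla\rho(-(K-S^\Xb)^-)$) its form (3) reads $\rho(-\frac{X_i}{S^\Xb}(K-S^\Xb)^-)=\Er_{\Qr^*}[\frac{X_i}{S^\Xb}(K-S^\Xb)^-]$. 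Defining $\alpha_i$ by the stated formula, the identity $\Er_{\Qr^*}[(K-S^\Xb)^+]=\Er_{\Qr^*}[(K-S^\Xb)^-]$ gives $\alpha_i\ge 0$ and $\sum_i\alpha_i=1$, so the standard payoff (\ref{formYi}) with $m=K$ lies in $\Ar^\Xb(K)$ and has $W_i^+=\alpha_i(K-S^\Xb)^+$. Using the splitting, positive homogeneity, $\rho((K-S^\Xb)^+)=-\Er_{\Qr^*}[(K-S^\Xb)^+]$ and the definition of $\alpha_i$,
\begin{equation*}
\rho(W_i)=\alpha_i\,\rho\big((K-S^\Xb)^+\big)+\rho\big(-\tfrac{X_i}{S^\Xb}(K-S^\Xb)^-\big)=-\Er_{\Qr^*}\big[\tfrac{X_i}{S^\Xb}(K-S^\Xb)^-\big]+\Er_{\Qr^*}\big[\tfrac{X_i}{S^\Xb}(K-S^\Xb)^-\big]=0,
\end{equation*}
so every net worth is acceptable and $\Yb$ is offsetting; with Lemma \ref{lemi} this yields $\Kc(\Xb)=K$. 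Finally $(K-S^\Xb)^-=(S^\Xb-K)^+$ turns the first expression for $\alpha_i$ into the second.

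\emph{Main obstacle.} The crux is the simultaneous optimality of one $\Qr^*$: the commonotonicity Remark is what lets a single $\Qr^*\in\nabla\rho(-S^\Xb)$ act as an optimal scenario for $(K-S^\Xb)^+$, for $-(K-S^\Xb)^-$, and (in the necessity part) for each $W_i$, thereby identifying the subgradient $\nabla\rho(-(K-S^\Xb)^-)$ entering Proposition \ref{proplin} with the $\nabla\rho(-S^\Xb)$ entering Proposition \ref{propequiv}; without commonotonicity these two subgradients need not intersect and the equivalence fails. A minor case to treat separately is the degenerate situation $\Er_{\Qr^*}[(K-S^\Xb)^+]=0$ (i.e. $S^\Xb=K$ $\Qr^*$-a.s.), where the denominator defining $\alpha_i$ vanishes; there the corresponding numerators vanish as well, any weights with $\sum_i\alpha_i=1$ serve, and (\ref{cnsoff}) collapses to $\sum_i\rho(-\frac{X_i}{S^\Xb}(K-S^\Xb)^-)=0$.
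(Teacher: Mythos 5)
Your proof is correct and follows essentially the same route as the paper's: both fix a single $\Qr^*\in\nabla\rho(-S^\Xb)$, use commonotonicity to split each net worth into positive and negative parts (the negative part $\frac{X_i}{S^\Xb}(K-S^\Xb)^-$ being forced by admissibility), and verify that the standard payoff with the stated weights $\alpha_i$ is offsetting. The only differences are organizational — you route necessity through Proposition \ref{propequiv}(ii) plus the commonotonicity Remark rather than through condition (iii) and the subadditivity splitting used in the paper, and you make explicit two checks the paper leaves implicit, namely $\sum_i\alpha_i=1$ and the degenerate case $\Er_{\Qr^*}\left[(K-S^\Xb)^+\right]=0$.
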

\begin{proof}
By Proposition \ref{propequiv} (iii),  $\widetilde \Yb\in \Ar^\Xb(K)$ is ofsetting $\Xb$ if and only if the condition (\ref{eqii}) is satisfied together with $\Er_{\Qr^*}[\widetilde Y_i-X_i]=0$ for all $i$.

Using the commonotonicity of $\rho$  we obtain  that an equivalent expression for (\ref{eqii}) is (applied to $\widetilde \Yb$):
\begin{align*}
& \[\rho\(\(\sum_{i=1}^N\widetilde  Y_i-X_i\)^+\)-\sum_{i=1}^N\rho\((\widetilde Y_i-X_i)^+\)\]\\
 &+\[\rho\(-\(\sum_{i=1}^N \widetilde Y_i-X_i\)^-\)-\sum_{i=1}^N\rho\(-(\widetilde Y_i-X_i)^-\)\]=0
\end{align*}
and because of the subadditivity property of $\rho$ each of the two expressions in the brackets is smaller or equal to 0. It follows the equivalent expression of (\ref{eqii}) is:
\begin{equation}\label{cns1}
\rho\(\(\sum_{i=1}^N\widetilde  Y_i-X_i\)^+\)=\sum_{i=1}^N\rho\((\widetilde Y_i-X_i)^+\)
\end{equation}
and
\begin{equation}\label{cns2}
\rho\(-\(\sum_{i=1}^N \widetilde Y_i-X_i\)^-\)=\sum_{i=1}^N\rho\(-(\widetilde Y_i-X_i)^-\).
\end{equation}
We notice that because $\widetilde \Yb$ is admissible, the expression (\ref{cns2}) equals (\ref{cnsoff}), so that the condition (\ref{cnsoff}) is necessary for the existence of ofsetting payoffs. We now show that it a sufficient condition. Indeed, we can always choose $\widetilde \Yb=\Yb$ with $\Yb$ standard and as stated in the proposition, we have that the random variables  $( Y_i-X_i)^+=\alpha_i (K-S^\Xb)^+$, $\forall i=1,...,N$. We then use the commonotonicity property of $\rho$ to conclude that (\ref{cns1}) is verified. The particular proportions $\alpha_i$ are found through the equality  $\Er_{\Qr^*}[Y_i-X_i]=0$ Proposition \ref{propequiv} (ii). Hence, once (\ref{cnsoff}) is verified,   the vector $\Yb$ fulfils the necessary and sufficient conditions to be offsetting.
\end{proof}

\section{A class of cohesive risk measures}

We consider  a random variable $H\geq 0$ that satisfies $1\leq \Er_{\Pr}[H]<\infty$ and introduce the risk measure: 
\begin{equation}\label{properu}
\rho^H(\xi):=\sup_{\Qr\in\Sc^H}\Er_{\Qr}[-\xi]
\end{equation} with a scenario set: 
\begin{equation}\label{propers}
\Sc^H:=\left \{\Qr\mid 0\leq \frac{d\Qr}{d\Pr}\leq H \quad \Pr\; a.s.\right \}
\end{equation}

The main result in this subsection is  that all  cohesive risk measures   that are commonotonic have this representation. A generalisation of these risk measures will follow afterwards.  Before we prove this result, we give some alternative characterisations of  $\rho^H$:

\begin{prop}\label{uH=AVar}
\begin{itemize}
\item[(1)]Let us denote $\Er_{\Pr}[H]=h$ and introduce the probability measure $\Hr \ll \Pr$ as: 
 $$
 \frac{d\Hr}{d\Pr}:=\frac{H}{h}.
 $$
For any $\xi\in L^\infty$ we have:
 \begin{align*}
\rho^H(\xi)&=AV@R_{(\Hr,1/h)}(\xi),
\end{align*}i.e.,  the average value at risk for $\xi$, at the level $1/h$ and under the probability $\Hr$. We recall  $AV@R_{(\Pr,\lambda)}(\xi)$ is defined as:
$$
AV@R_{(\Pr,\lambda)}(\xi)=\max_{\Qr\in\Sc_\lambda}\Er_{\Qr}[-\xi],
$$where $\Sc_\lambda$ is the set of all probability measures $\Qr\ll\Pr$ whose density $d\Qr/d\Pr$ is $\Pr$ \as bounded by $1/\lambda$.
\item[(2)]For any $\xi\in L^\infty$ we have:
 \begin{equation*}
\rho^H(\xi)=\Er_{\Qr^\xi}[-\xi]
\end{equation*} for a probability $\Qr^\xi$  satisfying:
\begin{equation}\label{Q0}
\frac{d\Qr^\xi}{d\Pr}=
\begin{cases} 
H &\text{ on } \{\xi<q \}\\
\frac{c}{h}H &\text{ on } \{\xi=q \}\\
0 &\text{ on } \{\xi > q\},
\end{cases}
\end{equation}
with $q=\inf\{x\;:\; \Er_\Pr\[H\one_{\xi\leq x}\]\geq 1\}$
 and:
\begin{equation*}
c=
\begin{cases} 
0 &\text{ if }\text{  } \Er_\Pr[H\one_{\xi=q}]=0\\
\frac{1-\Er_\Pr[H\one_{\xi<q}]}{ \Er_\Pr[H\one_{\xi=q}]} &\text{ otherwise. }
\end{cases}
\end{equation*}

\end{itemize}
\end{prop}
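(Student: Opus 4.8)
The plan is to treat the two parts separately: part (1) is an identification of scenario sets, while part (2) is a water-filling (Neyman--Pearson type) construction of the maximiser.

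For part (1), I would show that the scenario set $\Sc^H$ defining $\rho^H$ coincides with the set defining $AV@R_{(\Hr,1/h)}$. By definition the latter is $\max_{\Qr}\Er_\Qr[-\xi]$ over all $\Qr\ll\Hr$ whose density $d\Qr/d\Hr$ is bounded by $h$, $\Hr$-a.s. Since $d\Hr/d\Pr=H/h$, the chain rule gives, on $\{H>0\}$,
$$\frac{d\Qr}{d\Pr}=\frac{d\Qr}{d\Hr}\,\frac{d\Hr}{d\Pr}=\frac{d\Qr}{d\Hr}\,\frac{H}{h},$$
so $d\Qr/d\Hr\le h$ is equivalent to $d\Qr/d\Pr\le H$. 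On $\{H=0\}$ the measure $\Hr$ charges no mass, hence $\Qr\ll\Hr$ forces $d\Qr/d\Pr=0$ there, which is exactly the constraint $d\Qr/d\Pr\le H=0$. Thus the two families of densities coincide, and since both functionals are the supremum of the same linear map $\Qr\mapsto\Er_\Qr[-\xi]$ over the same set, they are equal. The standing assumption $h=\Er_\Pr[H]\ge1$ enters precisely to guarantee $1/h\le1$, so that the level is admissible and $\Sc^H$ is nonempty.

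For part (2), I would first check feasibility and then optimality. For feasibility, the constant $c$ is fixed so that the total $\Pr$-mass of $\Qr^\xi$ is $1$; since the density equals $H$ on $\{\xi<q\}$, equals $0$ on $\{\xi>q\}$, and is a fraction of $H$ on $\{\xi=q\}$, this amounts to placing the residual mass $1-\Er_\Pr[H\one_{\xi<q}]$ on the atom $\{\xi=q\}$. The definition $q=\inf\{x:\Er_\Pr[H\one_{\xi\le x}]\ge1\}$ yields $\Er_\Pr[H\one_{\xi<q}]\le1\le\Er_\Pr[H\one_{\xi\le q}]$, so the residual is nonnegative and no larger than the mass $\Er_\Pr[H\one_{\xi=q}]$ available on the atom; the required fraction therefore lies in $[0,1]$, the case distinction $\Er_\Pr[H\one_{\xi=q}]=0$ versus $>0$ matches the stated formula for $c$, and $0\le d\Qr^\xi/d\Pr\le H$ gives $\Qr^\xi\in\Sc^H$. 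For optimality, take any competitor $\Qr\in\Sc^H$ and set $g:=d\Qr^\xi/d\Pr-d\Qr/d\Pr$, so that $\Er_\Pr[g]=0$. By construction $d\Qr^\xi/d\Pr=H$ (its maximal allowed value) on $\{\xi<q\}$, whence $g\ge0$ there, and $d\Qr^\xi/d\Pr=0$ on $\{\xi>q\}$, whence $g\le0$ there. Consequently $(\xi-q)g\le0$ pointwise on each of $\{\xi<q\}$, $\{\xi=q\}$, $\{\xi>q\}$, so that
$$\Er_\Pr[\xi g]=\Er_\Pr[(\xi-q)g]+q\,\Er_\Pr[g]\le0.$$
This gives $\Er_{\Qr^\xi}[-\xi]=\Er_\Pr[-\xi\,d\Qr^\xi/d\Pr]\ge\Er_\Pr[-\xi\,d\Qr/d\Pr]=\Er_\Qr[-\xi]$, so $\Qr^\xi$ attains the supremum and $\rho^H(\xi)=\Er_{\Qr^\xi}[-\xi]$.

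I expect the main obstacle to be the bookkeeping in the feasibility step of part (2): checking that the water level $q$ defined by the infimum is well posed and that the atom at $\{\xi=q\}$ absorbs exactly the residual mass, so that the normalising fraction stays in $[0,1]$ (this is where $h\ge1$ and the two-case definition of $c$ are essential). The optimality inequality itself is the clean part once the sign pattern of $g$ relative to the threshold $q$ is pinned down; everything then reduces to the pointwise estimate $(\xi-q)g\le0$ together with $\Er_\Pr[g]=0$. A pleasant by-product is that this explicit construction makes part (2) self-contained, exhibiting the maximiser directly without appealing to the weak compactness of $\Sc$ or to James's theorem.
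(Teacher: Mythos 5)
Your proof is correct and takes essentially the same route as the paper: part (1) is the paper's own change-of-measure computation (the paper rescales densities by $h/H$ inside the supremum, you phrase the same identification as an equality of scenario sets via the chain rule), and part (2) spells out in full the Neyman--Pearson verification that the paper merely invokes by reference. Note only that your normalisation makes the density of $\Qr^\xi$ equal to $cH$ on $\{\xi=q\}$ (consistent with the generalisation (\ref{Qxi}) of the next section upon taking $L=0$), not $\frac{c}{h}H$ as literally written in (\ref{Q0}); that factor $\frac{1}{h}$ is a typo in the paper's statement, so this is not a gap on your side.
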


\begin{proof}
\begin{align*}
\rho^H(\xi)&=\sup \left\{-\Er_\Pr[\varphi \xi]\;|\; 0\leq \varphi\leq H\;,\; \Er_\Pr[\varphi]=1\right\} \\
&=\sup\left\{-\Er_{\Hr}[\psi \xi ]\;|\; 0\leq \psi\leq h\;,\; \Er_{\Hr}[\psi]=1\right\}\\
&=AV@R_{(\Hr,1/h)}(\xi).
\end{align*}
and we proved (1). 
By applying the Neyman-Pearson lemma, one can show (2). For more details, see also next section, where a generalisation appears, or Subsection 4.4 in \cite{FDbook}.
\end{proof}
Because $AV@R$ is commonotonic it follows that
\begin{cor}\label{corcomp}
The risk measure $\rho^H$ is commonotonic.
\end{cor}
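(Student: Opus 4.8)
The plan is to deduce the corollary directly from the identity $\rho^H(\xi)=AV@R_{(\Hr,1/h)}(\xi)$ established in Proposition \ref{uH=AVar}(1), by invoking the classical commonotonic additivity of the Average Value at Risk. Concretely, for commonotonic $\xi,\eta$ I want $\rho^H(\xi+\eta)=\rho^H(\xi)+\rho^H(\eta)$, and through the identity this reduces to showing that $AV@R_{(\Hr,1/h)}$ is commonotonically additive.

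First I would address the change of reference measure, which is the one point that deserves care. Commonotonicity is a \emph{pathwise} notion: $\xi,\eta$ are commonotonic as soon as $\xi=f(\zeta)$ and $\eta=g(\zeta)$ hold $\Pr$-almost surely for some $\zeta$ and nondecreasing $f,g$. Since $H\ge 0$ with $h=\Er_\Pr[H]\ge 1$, we have $\Hr\ll\Pr$, so every $\Pr$-a.s.\ identity persists $\Hr$-a.s.; hence $\xi,\eta$ commonotonic under $\Pr$ remain commonotonic under $\Hr$. Consequently it suffices to prove that $AV@R_{(\Hr,\lambda)}$ is commonotonic for $\lambda=1/h\in(0,1]$, a statement in which only the law of the arguments under $\Hr$ enters.

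For this last step I would use the quantile representation $AV@R_{(\Hr,\lambda)}(\xi)=\tfrac1\lambda\int_0^\lambda q_u(-\xi)\,du$, where $q_u(\cdot)$ is a quantile function computed under $\Hr$ (the exact one-sided convention is immaterial, since any two quantile functions of $-\xi$ differ only on a set of $u$ of Lebesgue measure zero). If $\xi,\eta$ are commonotonic, then so are $-\xi,-\eta$ — both are nondecreasing functions of $-\zeta$ — and commonotonic random variables have additive quantile functions, $q_u(-\xi-\eta)=q_u(-\xi)+q_u(-\eta)$ for a.e.\ $u$. Integrating over $(0,\lambda)$ and dividing by $\lambda$ gives $AV@R_{(\Hr,\lambda)}(\xi+\eta)=AV@R_{(\Hr,\lambda)}(\xi)+AV@R_{(\Hr,\lambda)}(\eta)$, and together with the two previous paragraphs this yields $\rho^H(\xi+\eta)=\rho^H(\xi)+\rho^H(\eta)$. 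The only nontrivial ingredient is the additivity of quantiles along commonotonic pairs, which is the step I would either cite (e.g.\ from \cite{FDbook}) or, if a self-contained argument is wanted, prove by observing that a sum of nondecreasing functions of a common variable is again a nondecreasing function of that variable, so that the generalized inverse of the sum is the sum of the generalized inverses.
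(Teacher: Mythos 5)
Your proposal is correct and takes essentially the same route as the paper, which derives the corollary in one line from Proposition \ref{uH=AVar}(1) together with the classical commonotonic additivity of $AV@R$. The extra details you supply --- that $\Pr$-a.s.\ commonotonicity persists under $\Hr\ll\Pr$, and the quantile-additivity argument for $AV@R_{(\Hr,1/h)}$ --- are exactly the points the paper leaves implicit, and they are handled correctly.
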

 
We are now ready to prove the main result of this section.

\begin{prop}\label{PropuH} A  commonotonic risk measure $\rho$ satisfying the weak compactness property is  cohesive if and only if it has the representation (\ref{properu}) for some random variable $H\in L^1(\Pr)$.
\end{prop}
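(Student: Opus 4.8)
The statement is an equivalence, so I would prove the two implications separately after first rewriting cohesion analytically. By Proposition \ref{offsettingStandardC}, $\rho$ is cohesive if and only if, for every $\Xb\in\Xc$, writing $S=S^\Xb$, $K=\rho(-S)$ and using $(K-S)^-=(S-K)^+$, one has $\rho(-(S-K)^+)=\sum_{i=1}^N\rho(-\tfrac{X_i}{S}(S-K)^+)$. Specialising to $N=2$ with $X_1=\theta S$, $X_2=(1-\theta)S$ for an arbitrary measurable $\theta\colon\Omega\to[0,1]$, cohesion forces, for every admissible $S$ and every such $\theta$,
\begin{equation*}
\rho(-(S-K)^+)=\rho(-\theta(S-K)^+)+\rho(-(1-\theta)(S-K)^+).
\end{equation*}
Read through Proposition \ref{proplin}, this family of additive identities is the engine of the argument.

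For necessity I would set $H:=\esssup_{\Qr\in\Sc}\tfrac{d\Qr}{d\Pr}$ (a priori $[0,\infty]$-valued) and aim to prove $\Sc=\Sc^H$. Fix an admissible $S$, put $W=(S-K)^+$, and write $c(A):=\rho(-\one_A)=\sup_{\Qr\in\Sc}\Qr(A)$. Applying Proposition \ref{proplin} to the split above produces a single $\Qr^W\in\nabla\rho(-W)\subseteq\Sc$ (nonempty by the weak compactness/James hypothesis) that is simultaneously optimal for every $\theta W$. Taking $\theta=\one_B$ for arbitrary $B$ gives $\Er_{\Qr^W}[\one_B W]\ge\Er_\Qr[\one_B W]$ for all $B$ and all $\Qr\in\Sc$, whence $\tfrac{d\Qr^W}{d\Pr}\ge\tfrac{d\Qr}{d\Pr}$ on $\{W>0\}$ for every $\Qr\in\Sc$, and therefore $\tfrac{d\Qr^W}{d\Pr}=H$ on $\{W>0\}$. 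Now for any set $A$ with $\Er_\Pr[H\one_A]=1$ I would realise $\{W>0\}=A$ by the probe $S=b\one_A$ with $b$ large: then $K=b\,c(A)$ and $W=b(1-c(A))\one_A$, so $\{W>0\}=A$ when $c(A)<1$, and the resulting $\Qr^W$ has density $H$ on $A$ with total mass $\int_A H\,d\Pr=1$, forcing density $0$ off $A$, i.e.\ the bang-bang measure $\Qr_A$ with $\tfrac{d\Qr_A}{d\Pr}=H\one_A$ lies in $\Sc$ (the degenerate case $c(A)=1$ gives $\Qr_A\in\Sc$ at once from attainment of $\sup_\Qr\Qr(A)$). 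These $\Qr_A$ are precisely the extreme points of $\Sc^H$, so Krein--Milman yields $\Sc\supseteq\Sc^H$; since $\Sc\subseteq\Sc^H$ by definition of $H$, we conclude $\Sc=\Sc^H$, that is $\rho=\rho^H$. Integrability $H\in L^1$ drops out of the same construction: if $\Er_\Pr[H]=\infty$ one chooses $A_k\subseteq\{H>k\}$ with $\int_{A_k}H\,d\Pr=1$, and the measures $\Qr_{A_k}\in\Sc$ then violate uniform integrability, contradicting weak compactness via Dunford--Pettis.

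For sufficiency, suppose $\rho=\rho^H$ and fix $\Xb\in\Xc$; I would verify (\ref{cnsoff}) directly. By Proposition \ref{uH=AVar}(2), the maximiser $\Qr^*\in\nabla\rho(-S)$ is the Neyman--Pearson measure with density $H$ on $\{S>\tau\}$ and $0$ on $\{S<\tau\}$, $\tau$ fixing total mass $1$; being supported on $\{S\ge\tau\}$ it satisfies $K=\Er_{\Qr^*}[S]\ge\tau$. Each summand $g_i:=\tfrac{X_i}{S}(S-K)^+$ vanishes off $\{S>K\}\subseteq\{S>\tau\}$, and $\Er_\Pr[H\one_{\{S>K\}}]\le\Er_\Pr[H\one_{\{S>\tau\}}]\le1$, so the very same $\Qr^*$ puts maximal density $H$ exactly where $g_i$ can be positive and wastes the rest on $\{g_i=0\}$; hence $\rho^H(-g_i)=\Er_{\Qr^*}[g_i]$ for every $i$. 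Summing and using $\sum_i g_i=(S-K)^+$ gives (\ref{cnsoff}), so $\rho^H$ is cohesive.

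The genuinely delicate direction is necessity. The main obstacle is the passage from the $\theta$-indexed additivity identities to the pointwise equality $\tfrac{d\Qr^W}{d\Pr}=H$ on $\{W>0\}$, followed by the realisation of every prescribed extreme density $H\one_A$ through the elementary probes $S=b\one_A$; establishing $H\in L^1$ is the technical crux, and this is exactly where weak compactness is indispensable. Its necessity is illustrated by $\rho=-\essinf$: here $(S-K)^+\equiv0$ so $\rho$ is (trivially) commonotonic and cohesive, yet $H\equiv+\infty\notin L^1$ and $\rho$ is not of the form $\rho^H$, precisely because its scenario set fails weak compactness.
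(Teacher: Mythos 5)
Your overall architecture is sound, and it is worth separating how it relates to the paper's proof. The sufficiency half is essentially the paper's argument in different packaging: both hinge on the fact that the Neyman--Pearson maximiser $\Qr^*$ of Proposition \ref{uH=AVar}(2) has density $H$ on $\{S>K\}$ (because $K\ge \tau$), so that it is simultaneously optimal for $-(S-K)^+$ and for every $-\tfrac{X_i}{S}(S-K)^+$; you verify (\ref{cnsoff}) and invoke Proposition \ref{offsettingStandardC}, while the paper checks condition (ii) of Proposition \ref{propequiv} for the standard payoff directly -- your routing is arguably cleaner since all the random variables involved are of one sign. The necessity half uses exactly the paper's probes (indicator-type liabilities split in two, with Proposition \ref{proplin} forcing a subgradient element of $\nabla\rho(-\one_A)$ to dominate every $\Qr\in\Sc$ on $A$, hence to have density $H$ there), but your endgame is genuinely different: the paper proves $\rho(-\one_B)=\Er[H\one_B]\wedge 1$ for \emph{all} sets $B$ (a three-case analysis) and then reduces a commonotonic $\rho$ to its values on indicators, whereas you show the bang-bang measures $H\one_A\,d\Pr$, $\Er_\Pr[H\one_A]=1$, lie in $\Sc$, identify them as the extreme points of $\Sc^H$ (correct on an atomless space), and conclude $\Sc=\Sc^H$ by Krein--Milman. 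This buys you a proof that bypasses the Choquet-type reduction to sets, at the price of needing weak compactness of $\Sc^H$ -- hence $H\in L^1$ -- \emph{before} Krein--Milman can be applied.

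And that is where the genuine gap sits, exactly at what you call the crux. Your integrability argument chooses $A_k\subseteq\{H>k\}$ with $\int_{A_k}H\,d\Pr=1$. Such sets need not exist: on subsets of $D:=\{H=\infty\}$ the set function $A\mapsto\int_A H\,d\Pr$ takes only the values $0$ and $\infty$, so if for instance $H=\infty$ on a set of probability $\tfrac12$ and $H=1$ elsewhere, then for $k>1$ no subset of $\{H>k\}$ carries mass exactly $1$. A priori nothing excludes $\Pr(D)>0$: your own example $\rho=-\essinf$ has $H\equiv\infty$, so ruling out an ``infinite part'' of $H$ is precisely the content of this step and cannot be waved through. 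The same issue affects the logical order of your write-up: Krein--Milman is applied to $\Sc^H$, which is weakly compact only once $H\in L^1$ is known, so integrability must be settled first rather than ``drop out of the same construction''.

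The gap is fixable with tools you already have on the table. Run your probe on an arbitrary positive-measure $A\subseteq D$, without the mass normalisation. If $\rho(-\one_A)\in(0,1)$, your Proposition \ref{proplin} step produces $\Qr^A\in\Sc$ whose density equals $H$ a.e.\ on $A$; since a density is finite a.e., this contradicts $A\subseteq D$. If $\rho(-\one_A)=0$, then every $\Qr\in\Sc$ vanishes on $A$, forcing $H=0$ a.e.\ on $A$, again a contradiction. Hence $\rho(-\one_A)=1$ for every positive-measure $A\subseteq D$. Now split $D$ into disjoint positive-measure pieces $D_j$ with $\Pr(D_j)\to 0$; weak compactness (attainment of the supremum) gives $\Qr_j\in\Sc$ with $\Qr_j(D_j)=1$, and this family violates uniform integrability, contradicting Dunford--Pettis. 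Therefore $\Pr(D)=0$; then $H\,d\Pr$ is $\sigma$-finite and atomless on $\{H>k\}=\{k<H<\infty\}$ with infinite total mass there (if $\Er[H]=\infty$), so your sets $A_k$ exist and your Dunford--Pettis contradiction closes the argument, after which Krein--Milman is legitimately available.
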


\begin{proof}  We first proof that $\rho^H$ is cohesive. The risk measure $\rho^H$ is commonotonic (Corrollary \ref{corcomp}) and clearly satisfies the weak compactness property. It is cohesive if for any given liability vector $\Xb$ offsetting payoffs exist. For  $\Xb\in\Xc$ and $S^\Xb=\sum_iX_i$,  let $\Qr^*$ be such that $K=\rho^H(-S^\Xb)=\Er_{\Qr^*}[S^\Xb]$. It is sufficient to show that  for arbitrary $\Xb\in\Xc$ and for $\Yb$ as in Proposition \ref{offsettingStandardC}, the conditions $Y_i-X_i\in \Ac$, $\forall\;i\in\{1,...,N\}$ are satisfied.   By construction the condition $\Er_{\Qr^*}(X_i-Y_i)=0$ is satisfied for all $i\geq 1$ and it remains to prove: 
\begin{equation}\label{uH}
\rho^H(Y_i-X_i)=\Er_{\Qr^*}(X_i-Y_i).
\end{equation}
We recall that $\Qr^*$ is defined as in (\ref{Q0}) where $h=\Er_\Pr[H]$. With a slight change in notation, let $q$ be such that $\Pr[S^\Xb\ge q]\ge 1/h\ge \Pr[S^\Xb > q]$.  Then 
$\Qr^*[ \{S^\Xb<q\}]=0$ and  ${\Qr^*}[S^\Xb>K]=\Er_{\Pr}\[H\one_{\{S^\Xb>K\}}\]$ (since obviously $K\ge q$).

For $i\geq 1$ and $\Qr\in\Sc$ we have:
\begin{align*}
\Er_{\Qr}[X_i-Y_i]&=-\int_{S^\Xb<K} \alpha_i(K-S^\Xb)d\Qr +\int_{S^\Xb>K} \frac{X_i}{S^\Xb}(S^\Xb-K)\left(\frac{d\Qr}{d\Pr}\right)d\Pr \\
&\leq \alpha_i\rho(-(K-S^\Xb)^+) +\int_{S^\Xb>K} \frac{X_i}{S^\Xb}(S^\Xb-K)Hd\Pr\\
&\leq \alpha_i\Er_{\Qr^*}[(K-S^\Xb)^+] +\int_{S^\Xb>K} \frac{X_i}{S^\Xb}(S^\Xb-K)Hd\Pr\\
&= \Er_{\Qr^*}[X_i-Y_i]
\end{align*}
We have used the fact that  $\rho(-(K-S^\Xb)^+)=\Er_{\Qr^*}[(K-S^\Xb)^+]$, which is a consequence of the commonotonicity of $\rho^H$ (Corollary \ref{corcomp}).  From the above inequality we obtain (\ref{uH}), for $i\in\{1,...,N\}$. 
\bigskip

\noindent
We now take $\rho$  cohesive, commonotonic, $\Sc$ weakly compact and prove that $\rho=\rho^H$ for some $H\in L^1$, that is,  there is a random variable $H$ so that  for any random variable $\xi\in L^\infty$, we have:
 \begin{equation}\label{whattoprove}
\rho(\xi)=\Er_{\Qr^\xi}[-\xi]
\end{equation}
where $\Qr^\xi$ is given in  (\ref{Q0}).

Because $\rho$ is commonotonic, its values are determined by the expression for sets. This brings us to the following reduction: find a random variable $0\le H\in L^1$ such that for all  $B\in \Fc$
\begin{equation}\label{whattoprove1}
\rho(-\one_B)=\Er_\Pr\[H\one_B\]\wedge 1=\rho^H(-\one_B).
\end{equation}
The natural candidate is of course $H=\sup_{\Qr\in\Sc}\frac{d\Qr}{d\Pr}$ and we will show that this function indeed works.  As a result we then get $\Sc=\Sc^H$.  The $\sup$ has to be understood in the measure theoretic sense and the reader who is not familiar with this concept can find the information in a course on measure theory.  However the proof below gives sufficient information to overcome the difficulties.

Let us first take $A\in\Fc$  such 
\begin{equation}\label{r1}
\rho(-\one_A)\in(0,1)
\end{equation} 
We consider a partition of $A$, $\tau(A)=\{A_1,A_2\}\subset \Fc$, the risks $X_1:=\one_{A_1}$, $X_2:=\one_{A_2}$ and their sum $S^\Xb =X_1+X_2=\one_{A}$.  We let 
$$
K:=\rho\(-\sum_i X_i\)=\rho\(-\one_A\),
$$
 and $K\in(0,1)$ due to  (\ref{r1}). This in turn leads to $A=\{S^\Xb>K\}$ and $A^c=\{S^\Xb<K\}$. There is $\Qr^A\in\nabla\rho (-\one_A)$ with $\rho(-\one_A)=\Qr^A[A]$.  Hence $\rho(\one_{A^c})=\rho(1-\one_A)=-1+\rho(-\one_A)=-\Qr^A[A^c]$.  As $\rho$ is cohesive, there exist offsetting payoffs for  the risk $\Xb=(X_1,X_2,0,..,0)$. We now consider $\Yb\in\Ar^\Xb(K)$ standard payoffs offsetting $\Xb$, with $Y_i=0$ for $i>2$. The residual risks of the units $i=1,2$ are given by
$$
X_i-Y_i= -\alpha_iK\one_{A^c}+(1-K)\one_{A_i}.
$$ By commonotonicity of $\rho$ and of the random variables $(Y_i-X_i)^+$ and $-(Y_i-X_i)^-$ the following hold (for $i=1,2$):
\begin{align*}
\rho(Y_i-X_i)&=\rho\((Y_i-X_i)^+\) +\rho\(-(Y_i-X_i)^-\)\\
&= \alpha_iK \rho(\one_{A^c})+(1-K)\rho(-\one_{A_i})\\
&= -\alpha_iK\Qr^A[A^c] +(1-K)\rho(-\one_{A_i})
\end{align*}
while by Proposition \ref{propequiv} we also know that for  $\Qr^A\in\nabla\rho (-\one_A)$:
\begin{align}\label{eqQ1i}
\rho(Y_i-X_i)=-\alpha_i K\Qr^A(A^c)+(1-K)\Qr^A(A_i)=0.
\end{align}
Therefore (remember that $0<K<1$), 
$$
\rho\(-\one_{A_i}\)=\Qr^A(A_i),\;\forall A_i\in\tau(A).
$$
We emphasize that the probability $\Qr^A$ is chosen independently of the partition of the set $A$, $\tau(A)$. That means: if  $\Qr^A\in \nabla \rho\(-\one_A\)$ then:
\begin{align}\label{qa}
\rho(-\one_{B})&=\Qr^A(B),\;\forall B\subset A.
\end{align}
From (\ref{qa}) we deduce that  for all probabilities measures $\Qr\in\Sc$
$$
\frac{d\Qr^A}{d\Pr}\one_A\geq \frac{d\Qr}{d\Pr}\one_A\quad \Pr\text{ a.s.},
$$ 
that is,  $\frac{d\Qr^A}{d\Pr}\one_A\ge H\one_A$, and because $\Qr^A\in\Sc$ we must get equality. That means that if (\ref{r1}) holds we have $\frac{d\Qr^A}{d\Pr}=H$ on the set $A$. In other words
$$
\rho(-\one_A)=\Er[H\one_A].
$$
For general sets $B\in\Fc$, we distinguish between several cases.
\begin{itemize}
\item[(a)]If $B\in\Fc$ is such that $\rho(-\one_B)\in(0,1)$, then $\rho(-\one_B)=\Er[H\one_{B}]$, as was proved above.
\item[(b)] If $B\in \Fc$ satisfies $\rho(-\one_B)=0$, then, $B$ is a null set for all probabilities in $\Sc$, consequently $H\one_B=0\;\Pr \;\as$ and in a trivial way $\rho(-\one_B)=\Er_\Pr[H\one_B]$.
\item[(c)]If $B\in \Fc$ satisfies $\rho(-\one_B)=1$ we will use the weak compactness and the property that the probability space is atomless.  There is a nondecreasing family of sets $A_t;0\le t\le 1$ such that $\Pr[A_t]=t\Pr[B]$ and $A_1=B$. The Lebesgue property (weak compactness) then shows that the function $t\rightarrow \rho(-\one_{A_t})$ is continuous.  It is nondecreasing, starts at $0$ and ends at $1$.  Therefore there is a unique number $s\le1$ such that for $t<s$, $\rho(-\one_{A_t})<1$ and for $t\ge s$, $\rho(-\one_{A_t})=1$.  For $t<s$ we have $\rho(-\one_{A_t})=\Er[H\one_{A_t}]$ and by continuity we  get
$1=\rho(-\one_{A_s})=\Er[H\one_{A_s}]$. Since $A_s\subset B$ we have $1=\rho(-\one_B)=\Er[H\one_B]\wedge 1$.  
\end{itemize} 
We thus have  verified (\ref{whattoprove1}). We must still show that $H\in L^1$. This is rather obvious since $\rho(-\one_{\{H>n\}})=\Er[H\one_{\{H>n\}}]\wedge 1$ and the left side tends to $0$ as $n\rightarrow \infty$. Hence eventually $\Er[H\one_{\{H>n\}}]<1<\infty$ and $H\in L^1$.
\end{proof}
\begin{cor} Suppose that the risk measure $\rho$ is commonotonic, satisfies the weak compactness property, is cohesive and law determined (rearrangement invariant). Then $\rho$ is  a tail expectation, i.e. there is a level $\alpha$ with $\rho=AV@R_{(\Pr,\alpha)}$.
\end{cor}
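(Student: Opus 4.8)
The plan is to read off the structural form of $\rho$ from the earlier results and then let law invariance collapse the free parameter. First I would invoke Proposition \ref{PropuH}: since $\rho$ is commonotonic, weakly compact and cohesive, it admits the representation $\rho=\rho^H$ for some $0\le H\in L^1(\Pr)$, with scenario set $\Sc^H=\{\Qr\mid 0\le d\Qr/d\Pr\le H\}$. Equivalently, by (\ref{whattoprove1}), $\rho(-\one_B)=\Er_\Pr[H\one_B]\wedge 1$ for every $B\in\Fc$. The whole task then reduces to showing that law invariance forces $H$ to be $\Pr$-a.s. constant. Once $H\equiv c$, Proposition \ref{uH=AVar}(1) finishes the job: there $h=\Er_\Pr[H]=c$ and $d\Hr/d\Pr=H/h=1$, hence $\Hr=\Pr$ and $\rho=\rho^H=AV@R_{(\Pr,1/c)}$, so that $\alpha:=1/c$ works. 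Note $c=\Er_\Pr[H]\ge 1$ because $H=\sup_{\Qr\in\Sc}d\Qr/d\Pr$ dominates the density of any fixed $\Qr\in\Sc$, whence $\alpha\in(0,1]$, the admissible range for a tail expectation.

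To show $H$ is constant I would argue by contradiction. If the law of $H$ is not a Dirac mass, one can pick reals $a<b$ with $\Pr[H\le a]>0$ and $\Pr[H\ge b]>0$, and then a constant $M$ with $\Pr[b\le H\le M]>0$. Using that $(\Omega,\Fc,\Pr)$ is atomless, for each sufficiently small $\delta>0$ choose sets $B_1\subset\{H\le a\}$ and $B_2\subset\{b\le H\le M\}$ with $\Pr[B_1]=\Pr[B_2]=\delta$. Then $\Er_\Pr[H\one_{B_1}]\le a\delta$ while $b\delta\le\Er_\Pr[H\one_{B_2}]\le M\delta$, and for $\delta$ small both integrals lie strictly below $1$. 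Consequently $\rho(-\one_{B_1})=\Er_\Pr[H\one_{B_1}]\le a\delta<b\delta\le\Er_\Pr[H\one_{B_2}]=\rho(-\one_{B_2})$. But $\one_{B_1}$ and $\one_{B_2}$ have the same law (Bernoulli with parameter $\delta$), so law invariance of $\rho$ forces $\rho(-\one_{B_1})=\rho(-\one_{B_2})$, a contradiction. Hence $H$ is a.s. constant, and the conclusion follows from the previous paragraph.

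I expect the main obstacle to be the contradiction step, specifically arranging the two test sets so that the expectations of $H$ are genuinely separated yet both stay below the truncation level $1$. The truncation $\wedge 1$ in $\rho(-\one_B)=\Er_\Pr[H\one_B]\wedge 1$ would mask any difference once the integrals reach $1$, so it is essential to work only with sets of small measure; this is exactly where the $L^1$-integrability of $H$ (which yields $\Er_\Pr[H\one_B]\to 0$ as $\Pr[B]\to 0$) and the atomlessness of the space are used, the latter to realize the prescribed common measure $\delta$ inside the two level sets of $H$. Everything else is a direct substitution into the representation results already established.
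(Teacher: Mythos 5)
Your proof is correct, and it shares the paper's skeleton: invoke Proposition \ref{PropuH} to get $\rho=\rho^H$, show that law invariance forces $H$ to be constant, then identify $\rho^H$ with a tail expectation via Proposition \ref{uH=AVar}. The difference lies in how the middle step is carried out, and it is a genuine one. The paper works on the dual side: it notes that law invariance of $\rho$ makes the scenario set $\Sc$ rearrangement invariant, and then asserts, as ``easy to see'' on an atomless space, that this forces $H=\sup_{\Qr\in\Sc}\frac{d\Qr}{d\Pr}$ to be constant; neither of these two implications is proved. You work on the primal side: from $\rho(-\one_B)=\Er_\Pr[H\one_B]\wedge 1$ you test $\rho$ on indicators of two sets of equal small measure $\delta$, one carved out of $\{H\le a\}$ and one out of $\{b\le H\le M\}$ with $a<b$; these indicators are equidistributed, so law invariance of $\rho$ forces the two (untruncated, since $\delta$ is small) integrals to coincide, contradicting $a\delta<b\delta$. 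This buys you a self-contained, elementary argument that never needs the transfer of law invariance from $\rho$ to $\Sc$ (itself a nontrivial duality fact), and it makes explicit exactly where atomlessness, the integrability of $H$, and the truncation $\wedge 1$ are used — precisely the points the paper's sketch glosses over. Your closing bookkeeping is also sound: $\Er_\Pr[H]\ge 1$ because $\Sc^H=\Sc$ contains at least one probability density dominated by $H$, so $\alpha=1/c\in(0,1]$ and $\rho=AV@R_{(\Pr,1/c)}$ as claimed.
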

Indeed if $\rho(\xi)$ is determined by the distribution of $\xi$, then the set $\Sc$ must be rearrangement invariant, i.e. if $f\in \Sc$ and $g$ has the same distribution as $f$, then also $g\in \Sc$.  It is easy to see (for an atomless space) that this implies that the function $H$ constructed above, must be a constant.  This is nothing else than the characterisation of AV@R.
\begin{rem}  For a scenario set $\Sc$ we can introduce a space of random variables.  We define
$$
E=\{\xi\mid \text{ for all }\Qr\in\Sc: \Er_\Qr[|\xi|]<\infty\}.
$$
For elements $\xi\in E$ we have that also $\rho(-|\xi|)=\sup_{\Qr\in\Sc}\Er_\Qr[|\xi|]<\infty$ and this expression defines a norm for which $E$ becomes a Banach space.  Examples show that in general $L^\infty$ is not dense in this space.  The risk measure $\rho$ has a natural extension to $E$.  Indeed we can define $\rho(\xi)= \sup_{\Qr\in\Sc}\Er_\Qr[-\xi]$.  In case $0<H\in L^1(\Pr), \Er[H]\ge 1$, the scenario set $\Sc^H$ also defines such a space which in this case is easy to describe.  As for tail expectation we have the following inequalities
$$
\int |\xi| \, d\Hr\le \rho(-|\xi|)\le h\int |\xi| \, d\Hr\text{ where }h=\Er[H]\text{ and }d\Hr=\frac{H}{h}\,d\Pr.
$$\
It follows that the space $E$ is  nothing else but the space $L^1(\Hr)$, with an equivalent norm.
\end{rem}
\section{Cohesion for fixed aggregated liability}
Above, we analysed group cohesion when the class of possible liability vectors is $\Xc$.  We now suppose that the overall (or aggregated) group liability is fixed and equals $Z\in L^\infty$ so that the class of all possible liability vectors becomes: $\{\Xb\in\Xc\;|\; \sum_{i=1}^N X_i=Z\}$. We generalise the class of risk measures  from the previous section as follows. We define:
\begin{equation}\label{uLH}
\rho^{L,H}(\xi) :=\sup_{\Qr\in\Sc^{L,H}}\Er_{\Qr}[-\xi],
\end{equation}
with a scenario set: 
\begin{equation}\label{propers}
\Sc^{L,H}:=\left \{\Qr: L\leq \frac{d\Qr}{d\Pr}\leq H \quad \Pr\; a.s.\right \},
\end{equation}
and where $L,H$ are nonnegative random variables that satisfy $0\leq L\leq H$, with $\Er[H]<\infty$ and $\Er[L]>0$. From the calculations below it will turn out that this risk measure is also commonotonic. We shall denote $\ell:=\Er[L]$ and $h:=\Er[H]$. We suppose $\ell<1$ and $h>1$ in order to exclude the trivial case $\Sc^{L,H}=\Pr$.
We  introduce the following probability measures:
\begin{equation}
\frac{d\Hr}{d\Pr}:=\frac{H-L}{h-\ell}\text{ and } \frac{d\Lr}{d\Pr}:=\frac{L}{\ell}\one_{\{\ell>0\}}+\one_{\{\ell=0\}}.
\end{equation}

For a given random variable $\xi$, we define a probability measure $\Qr^\xi$ as follows:
\begin{equation}\label{Qxi}
\frac{d\Qr^\xi}{d\Pr}:=
\begin{cases} 
H &\text{ on } \{\xi < q(\xi) \}\\
c(\xi) H+(1-c(\xi))L &\text{ on } \{\xi =q(\xi) \}\\
L &\text{ on } \{\xi  >q(\xi)\}
\end{cases} 
\end{equation}
with $q(\xi)$ and $c(\xi)$ being  constants (derived from the distribution of $\xi$)  to ensure  that $\Er_{\Pr}\[\frac{d\Qr^\xi}{d\Pr}\]=1$. These constants can be computed as follows. Let us denote:
$$
F(\xi,x):=\Er_{\Pr}\[H\one_{\xi\leq  x}+L\one_{\xi>  x}\]=\ell+\Er_\Pr\[(H-L)\one_{\xi\leq x}\].
$$
The function $F(\xi,\cdot)$ is increasing, right continuous,  and satisfies $\lim_{x\to-\infty} F(\xi,x)= \ell\geq 0$ and $\lim_{x\to\infty} F(\xi,x)= h \geq 1$. We denote:
\begin{equation}\label{qxi}
q(\xi):=\inf\{x: F(\xi,x)\geq 1\}
\end{equation} and define $c(\xi)$ as   satisfying $c(\xi) F(q(\xi))+(1-c(\xi))F(q(\xi)-)=1$, that is:
\begin{equation}\label{cxi}
c(\xi)=
\begin{cases}
0 &  \text{if $F$ is continuous at $q(\xi)$} \\
\frac{1-F(q(\xi)-)}{F(q(\xi))-F(q(\xi)-)}&\text{otherwise.} 
\end{cases}
\end{equation}
We observe that indeed  $\Er_{\Pr}\[\frac{d\Qr^\xi}{d\Pr}\]=c(\xi) F(q(\xi))+(1-c(\xi))F(q(\xi)-) =1$ as required for $\Qr^\xi$ to be a probability measure.

\begin{prop}
 For any $\xi\in L^\infty$ we have the following alternative representations for $\rho^{L,H}$:

\begin{align*}
\rho^{L,H}(\xi)&=\ell\, \Er_\Lr[-\xi]+(1-\ell) AV@R_{(\Hr,\gamma)}(\xi).
\end{align*}
for $\gamma=(1-\ell)/(h-\ell))<1$ and
$$
\rho^{L,H}(\xi)= \Er_{\Qr^\xi}[-\xi],
$$ 
where $\Qr^\xi$ defined in (\ref{Qxi}).

\end{prop}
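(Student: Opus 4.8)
The plan is to evaluate the supremum defining $\rho^{L,H}$ directly, obtaining the first representation by reducing to the case $L=0$ already settled in Proposition \ref{uH=AVar}, and the second by a Neyman--Pearson optimality argument. The starting point is to rewrite the defining supremum in terms of densities: since $\Sc^{L,H}$ is exactly the set of densities $\varphi=d\Qr/d\Pr$ with $L\le\varphi\le H$, for every $\xi\in L^\infty$ we have
$$\rho^{L,H}(\xi)=\sup\left\{\Er_\Pr[\varphi(-\xi)]\;:\; L\le\varphi\le H,\ \Er_\Pr[\varphi]=1\right\}.$$

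For the first representation I would peel off the baseline mass $L$. The substitution $\varphi=L+\tilde\varphi$ is a bijection between the feasible set above and $\{\tilde\varphi:0\le\tilde\varphi\le H-L,\ \Er_\Pr[\tilde\varphi]=1-\ell\}$, and it splits the objective as $\Er_\Pr[\varphi(-\xi)]=\ell\,\Er_\Lr[-\xi]+\Er_\Pr[\tilde\varphi(-\xi)]$ (using $d\Lr/d\Pr=L/\ell$ when $\ell>0$; the case $\ell=0$ is immediate). Rescaling $\tilde\varphi=(1-\ell)\varphi'$, which is legitimate because $1-\ell>0$, turns the residual supremum into $(1-\ell)\,\rho^{H'}(\xi)$ with $H'=(H-L)/(1-\ell)$. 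Applying Proposition \ref{uH=AVar}(1) to $\rho^{H'}$, I would check that its hypotheses hold ($H'\ge0$ and $\Er_\Pr[H']=(h-\ell)/(1-\ell)\ge1$ since $h>1$), that $h'=\Er_\Pr[H']=(h-\ell)/(1-\ell)$, that $d\Hr'/d\Pr=H'/h'=(H-L)/(h-\ell)=d\Hr/d\Pr$, and that $1/h'=(1-\ell)/(h-\ell)=\gamma$. Hence $\rho^{H'}(\xi)=AV@R_{(\Hr,\gamma)}(\xi)$, and collecting constants yields $\rho^{L,H}(\xi)=\ell\,\Er_\Lr[-\xi]+(1-\ell)AV@R_{(\Hr,\gamma)}(\xi)$.

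For the second representation I would verify directly that the supremum is attained at $\Qr^\xi$. First $\Qr^\xi\in\Sc^{L,H}$: on $\{\xi=q(\xi)\}$ its density is the convex combination $c(\xi)H+(1-c(\xi))L$ with $0\le c(\xi)\le1$, which follows from $F(\xi,q(\xi)-)\le1\le F(\xi,q(\xi))$ by the definition of $q(\xi)$ in (\ref{qxi}) and the right-continuity of $F(\xi,\cdot)$; thus $L\le d\Qr^\xi/d\Pr\le H$, while $\Er_\Pr[d\Qr^\xi/d\Pr]=1$ is the normalization already checked in the text. Then for any competing density $\varphi\in[L,H]$ with $\Er_\Pr[\varphi]=1$, writing $\varphi^*=d\Qr^\xi/d\Pr$ and $q=q(\xi)$, I would use the identity $\Er_\Pr[(\varphi^*-\varphi)(-\xi)]=\Er_\Pr[(\varphi^*-\varphi)(q-\xi)]$, where the constant $q$ drops out because both densities integrate to $1$. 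The integrand is pointwise nonnegative: on $\{\xi<q\}$ we have $\varphi^*=H\ge\varphi$ and $q-\xi>0$; on $\{\xi>q\}$ we have $\varphi^*=L\le\varphi$ and $q-\xi<0$; and $\{\xi=q\}$ contributes zero. Therefore $\Er_\Pr[\varphi^*(-\xi)]\ge\Er_\Pr[\varphi(-\xi)]$, so $\Qr^\xi$ realizes the supremum and $\rho^{L,H}(\xi)=\Er_{\Qr^\xi}[-\xi]$.

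The steps are elementary; the only points requiring care are the degenerate case $\ell=0$ in the definition of $\Lr$, the verification $0\le c(\xi)\le1$, and the measure-theoretic treatment of the atom $\{\xi=q(\xi)\}$ when $\Pr[\xi=q(\xi)]>0$, where the fractional weight $c(\xi)$ is exactly what restores $\Er_\Pr[\varphi^*]=1$. I expect the main obstacle to be bookkeeping rather than conceptual: keeping the constants straight in the reduction to $\rho^{H'}$, in particular confirming $\Hr'=\Hr$ and $1/h'=\gamma$, is where an error is most likely to creep in.
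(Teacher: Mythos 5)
Your proposal is correct, and its first half coincides with the paper's argument: the paper's one-shot substitution $\psi=\frac{(\varphi-L)(h-\ell)}{(H-L)(1-\ell)}$ is exactly your two-step peeling $\varphi=L+(1-\ell)\varphi'$, and your bookkeeping ($H'=(H-L)/(1-\ell)$, $\Hr'=\Hr$, $1/h'=\gamma$) checks out; whether one then cites Proposition \ref{uH=AVar}(1) or simply the definition of $AV@R_{(\Hr,\gamma)}$ is cosmetic. Where you genuinely depart from the paper is in the second representation. The paper obtains $\Qr^\xi$ by quoting the known maximizing density for $AV@R_{(\Hr,\gamma)}$, namely $\frac{d\Qr^\xi}{d\Hr}=\frac{1}{\gamma}(\one_{\{\xi<q\}}+c\one_{\{\xi=q\}})$, and transporting it back to a density with respect to $\Pr$; its proof of the second identity therefore rests on the first identity plus an external fact about AV@R (whose own proof, as the paper notes for Proposition \ref{uH=AVar}(2), is a Neyman--Pearson argument deferred to this section and to the literature). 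You instead verify optimality of $\Qr^\xi$ inside $\Sc^{L,H}$ from scratch: feasibility via $0\le c(\xi)\le 1$, which follows from $F(\xi,q(\xi)-)\le 1\le F(\xi,q(\xi))$ and right-continuity, and maximality via the pointwise inequality $(\varphi^*-\varphi)(q-\xi)\ge 0$, the constant $q$ being harmless because both densities integrate to one. This buys two things: your proof of the second identity is self-contained and independent of the first representation, and it sidesteps the paper's slightly loose transport step --- as literally written, the product $\frac{d\Qr^\xi}{d\Hr}\cdot\frac{d\Hr}{d\Pr}$ is the maximizer of the AV@R part only, and one must reinstate the $L$-component through the affine substitution to land on the expression (\ref{Qxi}), a point your direct verification never has to confront. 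The only cost is a somewhat longer write-up.
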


\begin{proof}We have by definition:
\begin{equation}
\rho^{L,H}(\xi)=\sup \left\{-\Er_\Pr[\varphi \xi ]\;|\; L\leq \varphi\leq H\;,\; \Er_\Pr[\varphi]=1\right\} 
\end{equation}

Using the transformation $\psi:=\frac{(\varphi -L)(h-\ell)}{(H-L)(1-\ell)}$ we obtain that $\{ L\leq \varphi\leq H\;,\; \Er_\Pr[\varphi]=1\}= \{ 0\leq \psi\leq \frac{h-\ell}{1-\ell}\;,\; \Er_{\Hr}[\psi]=1\} $ and:

\begin{align*}
\Er_\Pr[\varphi \xi ]&=\Er_\Pr[L \xi ]+\Er_\Pr[(\varphi-L) \xi ]\\
&=\ell\,\Er_\Pr\[ \xi\frac{d\Lr}{d\Pr} \]+(1-\ell)\Er_{\Pr}\[\psi \xi \frac{d\Hr}{d\Pr}\]\\
&=\ell \,\Er_\Lr [\xi ]+(1-\ell)\Er_{\Hr}[\psi \xi ].
\end{align*}

Therefore, an equivalent expression for $\rho^{L,H}$ is:
\begin{align*}
\rho^{L,H}(\xi)
&=\ell \,\Er_\Lr[-\xi]+ (1-\ell) \sup \left \{-\Er_{\Hr}[\varphi \xi ]\;|\; 0\leq \varphi\leq \frac{h-\ell}{1-\ell}\;,\; \Er_{\Hr}[\varphi ]=1 \right \}\\
&=\ell \,\Er_\Lr[\xi]+(1-\ell) AV@R_{(\Hr,\gamma)}(\xi),
\end{align*}where $AV@R_{(\Hr,\gamma)}(-\xi)$ is the average value at risk for $-\xi$, at the level $\gamma=(1-\ell)/(h-\ell))$ and under the probability $\Hr$.We notice that $\gamma<1$ ads $h>1$.  The optimiser probability for $AV@R$ is known to be $\Qr^\xi$:
$$
\frac{d\Qr^\xi}{d\Hr} =\frac{1}{\gamma}\(\one_{\xi<q}+c\one_{\xi=q}\),
$$ with $q$ a $\gamma$ quantile of $\xi$ under $\Hr$ and $c=0$ if $\Hr(\xi=q)=0$  and  $c=\(\gamma-\Hr(\xi<q)\)/\Hr(\xi=q)$ otherwise. 
Writing $\frac{d\Qr^\xi}{d\Hr}\times \frac{d\Hr}{d\Pr} $  we obtain the expression (\ref{Qxi}) with  the constants $q=q(\xi)$ and $c= c(\xi)$.

\end{proof}
\begin{prop} Consider the regulator's risk measure is $\rho= \rho^{L,H}$. Let $Z\in L^\infty_+$ with  $q(-Z)$ the corresponding constant, as defined in (\ref{qxi}). Suppose further that
$$
\rho^{L,H}(-Z)\geq -q(-Z),
$$
(for this inequality to hold it is sufficient that $AV@R_{(\Hr,\gamma)}(-Z)\leq \Er_\Lr[Z]$).
Then, for all risk vectors $\Xb$ satisfying $\sum_{i=1}^N  X_i=Z$, the following equality is satisfied
$$
\Kc(\Xb)=\rho(-Z).
$$
\end{prop}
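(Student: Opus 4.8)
The plan is to reduce the claim to the necessary-and-sufficient criterion already proved in Proposition~\ref{offsettingStandardC}, applied to $\rho=\rho^{L,H}$ (which is commonotonic, with weakly compact scenario set since its densities are dominated by $H\in L^1$). Since $S^\Xb=\sum_i X_i=Z$ and $K=\rho^{L,H}(-Z)$, and since $(K-S^\Xb)^-=(Z-K)^+$, Proposition~\ref{offsettingStandardC} tells us that $\Kc(\Xb)=\rho^{L,H}(-Z)$ holds exactly when condition (\ref{cnsoff}) is satisfied, i.e.
\begin{equation*}
\rho^{L,H}\(-(Z-K)^+\)=\sum_{i=1}^N\rho^{L,H}\(-\frac{X_i}{Z}(Z-K)^+\).
\end{equation*}
So it suffices to verify this single identity; note that $\sum_i\frac{X_i}{Z}(Z-K)^+=(Z-K)^+$ because $\sum_i X_i=Z$, and that each summand is nonnegative and vanishes off the set $\{Z>K\}$.

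The engine of the argument is the hypothesis $\rho^{L,H}(-Z)\ge -q(-Z)$, which I would read as a statement about where the optimizing density saturates at its upper bound. Let $\Qr^*=\Qr^{-Z}$ be the measure attached to $\xi=-Z$ in (\ref{Qxi}); its density equals $H$ on $\{-Z<q(-Z)\}=\{Z>-q(-Z)\}$, and $\Qr^*\in\Sc^{L,H}$ by construction. The hypothesis says $K\ge -q(-Z)$, whence $\{Z>K\}\subseteq\{Z>-q(-Z)\}$, and therefore $\frac{d\Qr^*}{d\Pr}=H$ on all of $\{Z>K\}$. This is the decisive point: on the region carrying every positive-part payoff, the optimizer already sits at the envelope $H$.

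With this in hand the verification is a one-line comparison. For any nonnegative $W\in L^\infty$ supported in $\{Z>K\}$ one has, on one side, $\rho^{L,H}(-W)=\sup_{\Qr\in\Sc^{L,H}}\Er_\Qr[W]\le\Er_\Pr[HW]$ since every admissible density is bounded by $H$ and $W\ge 0$; on the other side, testing with $\Qr^*$ gives $\rho^{L,H}(-W)\ge\Er_{\Qr^*}[W]=\Er_\Pr[HW]$, the last equality because $\frac{d\Qr^*}{d\Pr}=H$ on $\{Z>K\}\supseteq\{W\neq 0\}$. Hence $\rho^{L,H}(-W)=\Er_\Pr[HW]$. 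Applying this to $W=\frac{X_i}{Z}(Z-K)^+$ for each $i$ and to $W=(Z-K)^+$, then summing,
\begin{equation*}
\sum_{i=1}^N\rho^{L,H}\(-\frac{X_i}{Z}(Z-K)^+\)=\Er_\Pr\[H\sum_{i=1}^N\frac{X_i}{Z}(Z-K)^+\]=\Er_\Pr\[H(Z-K)^+\]=\rho^{L,H}\(-(Z-K)^+\),
\end{equation*}
which is exactly (\ref{cnsoff}); Proposition~\ref{offsettingStandardC} then delivers $\Kc(\Xb)=K=\rho^{L,H}(-Z)$.

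Finally, for the parenthetical sufficient condition I would use the representation $\rho^{L,H}(-Z)=\ell\,\Er_\Lr[Z]+(1-\ell)\,AV@R_{(\Hr,\gamma)}(-Z)$ together with the elementary bound $AV@R_{(\Hr,\gamma)}(-Z)\ge -q(-Z)$ (the tail average dominates the quantile, and $q(-Z)$ is the $\gamma$-quantile of $-Z$ under $\Hr$). If $AV@R_{(\Hr,\gamma)}(-Z)\le\Er_\Lr[Z]$, then replacing $\Er_\Lr[Z]$ by the smaller $AV@R$ term gives $\rho^{L,H}(-Z)\ge AV@R_{(\Hr,\gamma)}(-Z)\ge -q(-Z)$, as claimed. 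The only genuine obstacle in the whole argument is the identification of the saturation set $\{Z>K\}\subseteq\{Z>-q(-Z)\}$ from the hypothesis; once that is in place, everything downstream is the routine sup/inf sandwich above.
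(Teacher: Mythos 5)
Your proof is correct and follows essentially the same route as the paper's: both use the hypothesis $K\geq -q(-Z)$ to conclude that the optimizing density $\frac{d\Qr^*}{d\Pr}$ equals $H$ on $\{Z>K\}$, deduce that $\rho^{L,H}(-W)=\Er_\Pr[HW]$ for nonnegative $W$ supported there, verify condition (\ref{cnsoff}) by summation, and invoke Proposition \ref{offsettingStandardC}; the treatment of the parenthetical sufficient condition is likewise identical. Your write-up is if anything slightly cleaner, isolating the sup/inf sandwich as a one-line lemma rather than phrasing it through the negative parts $(Y_i-X_i)^-$ of admissible payoffs.
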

\begin{proof}Let us  denote $K=\rho^{L,H}(-Z)$ and $\Qr^*$ be the probability in (\ref{Qxi}) with $\xi=-Z$. If $K\geq -q(-Z)$,  then $\{Z>K\}\subset\{Z> -q(-Z)\}=\{-Z< q(-Z)\}=\{\frac{d\Qr^*}{d\Pr}=H\}$.

For any $\Xb\in\Xc$ satisfying  $\sum_{i=1}^N X_i=Z$ and for any corresponding vector $\Yb\in\Ar^\Xb(K)$, we have:  $\forall i$ $\{X_i>Y_i\}\subset\{Z>K\}$. Therefore,  we obtain:
$$
\rho(-(Y_i-X_i)^-)= \Er_{\Qr^*}\[-(Y_i-X_i)^-\]=\Er_\Pr\[-(Y_i-X_i)^-H\]. 
$$
It follows that the condition (\ref{cnsoff}) is satisfied and, as $\rho^{L,H}$ is commonotonic, the equality  $\Kc(\Xb)=\rho(-Z)$ holds as an application of Proposition \ref{offsettingStandardC}.

It remains to prove the claim that if $AV@R_{(\Hr,\gamma)}(-Z)\leq \Er_\Lr[Z]$ then $\rho^{L,H}(-Z)\geq -q(-Z)$. 
We have that $-q(-Z)\leq AV@R_{\Hr,\gamma}(-Z)$ (this is always true). Then, the condition\\  $AV@R_{(\Hr,\gamma)}(-Z)\leq \Er_\Lr[Z]$ implies $AV@R_{(\Hr,\gamma)}(-Z) \leq \rho(-Z)$, and hence the claim is proved.

\end{proof}


\begin{thebibliography}{100}

\bibitem{ADEH2} Artzner, Ph., F. Delbaen, J.-M. Eber, and D. Heath:  {\em Coherent Measures of Risk},  Mathematical Finance {\bf 9}(3), 203--228, (1999).

\bibitem{BarrElK05}  Barrieu, P. and N. El Karoui: {\em Inf-convolution of risk measures and optimal risk transfer. } Finance and Stochastics {\bf 9}, 269--298 (2005).

\bibitem{BarrElK05a} Barrieu, P. and  N. El Karoui: {\em Pricing, hedging and optimally designing derivatives via minimization of risk measures.} R. Carmona (eds.): Volume on Indifference Pricing. Princeton University Press (2005).

 \bibitem{BurRus06} Burgert, C. and L. R\"uschendorf: {\em On the optimal risk allocation problem.} Statistics \& Decisions {\bf 24}, 153--171 (2006).

\bibitem{CocDel20} Coculescu, D. and F. Delbaen: {\em Fairness principles for insurance contracts in presence of  default risk}. Working Paper arXiv:2009.04408 (2020).

\bibitem{Pisa} Delbaen, F.:   {\em Coherent Risk Measures},  Lectures given at the Cattedra Galileiana at the Scuola Normale Superiore di Pisa,   Published by the {\em Scuola Normale Superiore di Pisa} (2002).

\bibitem{FDbook} Delbaen, F:    {\em Monetary Utility Functions}, Lectures held in 2008 and published in the series ``Lecture Notes of the University of Osaka"   (2011),

\bibitem{EmbLiuMao} Embrechts, P., H. Liu, T. Mao et al.  {\em Quantile-based risk sharing with heterogeneous beliefs} Math. Program. {\bf 181}, 319--347 (2020).

\bibitem{FilKup08} Filipovi\'c, D. and M. Kupper:  {\em  Equilibrium Prices For Monetary Utility Functions}, International Journal of Theoretical and Applied Finance (IJTAF), World Scientific Publishing Co. Pte. Ltd., vol. {\bf 11}(03), 325--343 (2008).

\bibitem{HeaKu04} Heath, D. and H. Ku: {\em  Pareto equilibria with coherent measures of risk.} Mathematical Finance 14, 163--172 (2004).

\bibitem{JouSchTou08}  Jouini, E., Schachermayer, W., and N. Touzi: {\em Optimal Risk Sharing for Law Invariant Monetary Utility Functions}, Mathematical Finance, {\bf 18}, 269--292 (2008).


\end{thebibliography}
\end{document}